\def \FFF {\mathfrak{F}}
\def \BB {\mathscr{B}}
\def \DD {\mathcal{D}}
\def \HH {\mathcal{H}}
\def \NN {\mathcal{N}}
\def \CCC {\mathbb{C}}
\def \equals {\ = \ }
\def \Tr {\mathrm{Tr}}
\def \bn {{\mathfrak n}}
\def \lu {}
\def \Cor {\mathfrak{C}}
\newtheorem{lemma}{Lemma}
\newtheorem{proposition}{Proposition}
\newtheorem{corollary}{Corollary}
\newtheorem{remark}{Remark}
\newtheorem{definition}{Definition}
\newtheorem*{proposition*}{Proposition}
\author{
Alex D. Gottlieb\footnote{Wolfgang Pauli Institute c/o Fakult\"at f. Mathematik, Universit\"at Wien, Oskar-Morgensternplatz 1, 1090 Vienna, Austria} 
\ and Norbert J. Mauser\footnotemark[1]
}
\title{Nonfreeness and related functionals for measuring correlation in many-fermion states}
\begin{document}

\maketitle

\begin{abstract}




This article is a brief review of ``nonfreeness" and related measures of ``correlation" for many-fermion systems.  

The many-fermion states we deem ``uncorrelated" are the gauge-invariant quasi-free states.  Uncorrelated states of systems of finitely many fermions we call simply ``free" states.  
Slater determinant states are free; all other free states are ``substates" of Slater determinant states or limits of such.  

The nonfreeness of a many-fermion state equals the minimum of its entropy relative to all free states.
Correlation functionals closely related to nonfreeness can be defined in terms of R\'enyi entropies; nonfreeness is the one that uses  Shannon entropy.   
These correlation functionals all share desirable additivity and monotonicity properties,
but nonfreeness has some additional attractive properties.

\end{abstract}



\section{Introduction}

``Nonfreeness" is an entropy functional of states of many-electron systems.  It was introduced as a ``measure of electron correlation" \cite{GottliebMauser2007,GottliebMauserArchived} 
that is purely a functional of the many-electron state, depending only on the structure of the state and not upon the physical circumstances attending it, e.g., the Hamiltonian operator for the system \cite{correlation}.  

By definition, the nonfreeness of a many-fermion state is its entropy relative to the unique gauge-invariant quasi-free state with the same $1$-particle density matrix ($1$-pdm).       
Gauge-invariant quasi-free (GIQF) states have $0$ nonfreeness by definition, but the nonfreeness of any other many-fermion state is positive, possibly infinite.  
Slater determinant states of $n$-fermions and ``Fermi sea" states of infinitely many fermions are GIQF, as are 
restrictions of such states.
Conversely, any GIQF state can be represented as 
restriction of a Slater determinant or Fermi sea state.  
These are the states we deem ``uncorrelated." 

In this article we shall mainly consider normal states of finite systems of fermions, identifying such states with the density operators that represent them on a fermion Fock space.  
Among such states we focus on those that have finite expected particle number.  
The GIQF states of finite average particle number we call simply ``free" states.

For pure $n$-fermion states, 
the nonfreeness functional coincides with ``particle-hole symmetric correlation entropy" \cite{Gori-GiorgiZiesche}.   
Particle-hole symmetric correlation entropy has been used to quantify electron correlation in the uniform electron gas \cite{Gori-GiorgiZiesche} and in short linear chains undergoing a Mott transition \cite{Bendazzoli...}. 
Particle-hole symmetric correlation entropy is defined only for pure states; nonfreeness is an extension of that functional to the domain of mixed many-fermion states, i.e., states that can be represented by density operators on the fermion Fock space.   

A correlation functional for mixed states can be useful even if the system of interest is one of exactly $n$ fermions in a pure state, 
because open subsystems of the system of interest are typically in a mixed states, containing a random number of particles.  
Consider, for example, a system of fermions on a lattice.  
The fermions that occupy a given site or block of sites constitute a subsystem that is typically in a mixed state, and the von Neumann entropy of that local state can reflect physical properties such as quantum phase transitions \cite{VidalLatorreRicoKitaev,GuDengLiLin,LarssonJohannesson}.   
Indeed, nonfreeness has been used to quantify local correlation in a realistic 
tight-binding model of a transition metal oxide heterostructure  \cite{MauserHeld}.    

The state of a many-fermion system determines the states of all its subsystems (e.g., local states in an extended system).  
The induced state of a subsystem may be called a  ``restriction" \cite{HainzlLewinSolovej} or ``localization" \cite{LewinNamRougerie} of the state; we call it a ``substate."  
Nonfreeness is monotone with respect to restriction of states:  the nonfreeness of a substate is less than or equal to the nonfreeness of the state from which it is derived \cite{GottliebMauser2007}.   
Also, nonfreeness is additive over independent subsystems:  when a many-fermion state is a product of statistically independent substates, its nonfreeness is the sum of its substates' nonfreeness \cite{GottliebMauser2007}.

The monotonicity and additivity properties of nonfreeness derive from its definition as a relative entropy.  
Correlation functionals closely related to nonfreeness can be defined using R\'enyi divergences instead of relative entropy.
R\'enyi divergences also enjoy the properties of additivity and monotonicity, and so do the correlation functionals defined in terms of them.   
Indeed, the ``new measure of electron correlation" that we proposed in Ref.~\cite{GottliebMauser2005} is of this type.  
However, within this class of correlation functionals, the nonfreeness functional has a couple of additional attractive properties, presently to be stated precisely.

Suppose $\Delta$ is a density operator on a fermion Fock space that represents a state of finite average particle number, 
and let $\Gamma_\Delta$ denote the density operator of the unique free state with the same $1$-pdm as $\Delta$.   
The nonfreeness of $\Delta$, or of the state it represents, 
is defined to be $S(\Delta \|\Gamma_\Delta)$, the entropy of $\Delta$ relative to $\Gamma_\Delta$. 

The nonfreeness of $\Delta$ is given by the following simple formula, provided its von Neumann entropy $S(\Delta) = -\Tr (\Delta \log\Delta)$ is finite:
\begin{eqnarray}
\label{simple formula}
\lefteqn{ S(\Delta \| \Gamma_{\Delta}) \equals  S(\Gamma_{\Delta}) - S(\Delta) }\nonumber \\
        & = &   - \sum p_j \log p_j - \sum (1-p_j) \log(1-p_j)  \ - \ S(\Delta) \qquad
\end{eqnarray} 
where the $p_j$ denote the eigenvalues of the $1$-pdm of $\Delta$, its natural occupation numbers. 

In any case, the nonfreeness of a many-fermion state is equal to the minimum of its entropy relative to all free states: 
\begin{equation}
\label{minimum property}
 S(\Delta \|\Gamma_\Delta) \equals \min \big\{S(\Delta\|\Gamma ) : \ \Gamma  \ \hbox{is free} \big\}\ .
\end{equation}
Moreover, if the minimum in (\ref{minimum property}) is finite, then $\Gamma = \Gamma_\Delta$ is the unique minimizer of $S(\Delta\|\Gamma )$ over all free states $\Gamma$.  

The nonfreeness of a many-fermion state $\Delta$ is its entropy relative to the free ``reference" state $\Gamma_\Delta$.  
Other authors have essayed similar relative entropy measures of correlation strength, using various other uncorrelated reference states chosen {\it ad hoc} on physical grounds \cite{Byczuk,ByczukErrata}. 
They proposed to use judicious choices of ``physically well-known uncorrelated states" $\Gamma$ as reference states, avowedly because  \cite{Byczuk footnote} they did not know which choice of $\Gamma$ minimizes $S(\Delta\|\Gamma)$.   
Shortly afterward, Held and Mauser \cite{MauserHeld} pointed out that the minimizer is $\Gamma_{\Delta}$, and 
and argued that (\ref{minimum property}) means that other choices of $\Gamma$ necessarily overestimate correlation.    In this review we will present a very thorough proof of (\ref{minimum property}).  


\[
\star
\]

The rest of this article is organized as follows:

Section~\ref{Density operators on Fock space} presents the notation and terminology required for reading 
Sections~\ref{Free states} - \ref{Special properties of nonfreeness}.   

In Section~\ref{Free states} we define free states.  Prop.~\ref{free at last} there asserts that free states are substates of Slater determinant states or limits of such states. 

In Section~\ref{Relative entropy correlation functionals} we discuss correlation functionals that are closely related to nonfreeness, focusing on properties they share.  

In Section~\ref{Special properties of nonfreeness} we review special properties of nonfreeness.   
The simple formula (\ref{simple formula}) for nonfreeness is developed in Prop.~\ref{max ent prop} and its corollary; and the minimum property (\ref{minimum property}) of nonfreeness is proven in Prop.~\ref{superproposition}.  

In order not to impede the review while keeping the article as a whole fairly self-contained, 
many of the technical details and some of the proofs have been removed to Section~\ref{Appendices}, which effectively consists of nine appendices.


\section{Density operators on Fock space}
\label{Density operators on Fock space}

Let $\HH$ denote a Hilbert space, the $1$-particle Hilbert space.  Unit vectors in $\HH$ are called ``orbitals."    Although the notation we will use suggests that $\HH$ has countably infinite Hilbert dimension, this is not required; everything works for $\HH$ of any dimension.  

The Hilbert space for finite systems of fermions in $\HH$ is the fermion Fock space over $\HH$, which we shall denote by  $\FFF(\HH)$ or simply $\FFF$.
Let $\hat{a}^*(f)$ and $\hat{a}(f)$ denote the creation and annihilation operators for $f \in \HH$, 
defined as bounded operators on the Fock space \cite{BratteliRobinson}.

An $n$-fermion ``Slater determinant" wave function can be identified with a Fock space vector 
\begin{equation}
\label{Slater determinant in Fock space}
       |\Phi\rangle  \equals \hat{a}^*(f_1)\hat{a}^*(f_2) \cdots \hat{a}^*(f_n)|\Omega \rangle
\end{equation}
where $f_1,\ldots,f_n$ are orthonormal orbitals in $\HH$.  
We think of the vacuum vector $|\Omega \rangle$ as a $0$-particle Slater determinant wave function.  

In this article, we are going to focus on states of many-fermion systems that can be represented by density operators on the Fock space, 
especially those that represent states of finite average particle number.  

Let $\Delta$ be a density operator on $\FFF(\HH)$.  
The ``$1$-particle density matrix" or ``$1$-pdm" of $\Delta$ is the bounded operator $\gamma_\Delta$ on $\HH$ such that 
\begin{equation}
\label{defining property of $1$-pdm}
    \langle g    | \gamma_\Delta f   \rangle \equals \Tr\big(\Delta \hat{a}^*(f)\hat{a}(g) \big)
\end{equation}
for all $f,g \in \HH$.  
The preceding formula implies that $\gamma_{\Delta}$ is a positive-semidefinite contraction.  Its eigenvectors are called ``natural orbitals" of $\Delta$, 
and the corresponding eigenvalues are the ``natural occupation numbers" of $\Delta$.  

If $h \in \HH$ is a unit vector,  the diagonal matrix element $ \langle h | \gamma_\Delta h  \rangle$ of $\gamma_\Delta $ 
is the probability that the orbital $h$ is occupied when the system is in the state represented by $\Delta$.  This is because 
the operator $\hat{a}^*(h)\hat{a}(h)$ corresponds to the physical observable of orbital $h$'s occupation, which takes the values $0$ and $1$.
The expected value of this observable, the probability that orbital $h$ is occupied, 
is therefore $\Tr\big( \Delta \hat{a}(h)^*\hat{a}(h) \big)$, and this equals $\langle h|\gamma_{\Delta} h \rangle $ by definition.  

We are especially interested in the class of density operators on $\FFF = \FFF(\HH)$ that represent states of finite average particle number.     
We shall denote this class by $\DD(\FFF)$.   
If $\Delta$ is a density operator on $\FFF$, then the average number of particles in the state represented by $\Delta$ equals the trace of its $1$-pdm.  
Thus, a density operator $\Delta$ belongs to $\DD(\FFF)$ if and only if $\Tr (\gamma_\Delta) < \infty$.  
Note that $\DD(\FFF)$ contains all the Slater determinant states $|\Phi\rangle\!\langle \Phi |$ where $|\Phi\rangle$ is a Slater determinant wave function (\ref{Slater determinant in Fock space})  in $\FFF$.

\section{Free states}
\label{Free states}

In this section we shall define and discuss the kind of many-fermion states we consider uncorrelated.  
 
We restrict our attention to states that are represented by density operators on a fermion Fock space, and which have finite average particle number. 
Among such states, we consider Slater determinant states to be uncorrelated, as well as any state that can be represented as a ``substate" or restriction of a Slater determinant state.  
A state can be represented as a substate of a Slater determinant state if and only if it is ``gauge-invariant quasi-free" \cite{AlickiFannes}  and its $1$-pdm has finite rank.  
We want limits of free states to be free, too.  
The smallest class of states containing all substates of Slater determinant states and limits of such is the class of ``free" states, which we define as follows:

\begin{definition} 
\label{free def}
A density operator $\Gamma$ on a fermion Fock space is ``free" if it represents a gauge-invariant quasi-free state and its $1$-pdm has finite trace.
\end{definition}
\noindent
Quasi-free states with finite expected particle number are called ``generalized Hartree-Fock" states in Ref.~\cite{BachLiebSolovej}.  
Accordingly, free states are gauge-invariant generalized Hartree-Fock states.  

Gauge-invariant quasi-free states are discussed in Section~\ref{Gauge-invariant quasi-free states}.
In Section~\ref{Free states appendix} we will prove that all free states are limits of substates of Slater determinant states:

\begin{proposition}  
\label{free at last}
A density operator on a fermion Fock space is free if and only if (i) its $1$-pdm has finite trace, and (ii) it is a limit in trace norm of density operators that represent substates of Slater determinant states.
\end{proposition}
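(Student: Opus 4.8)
The plan is to prove the two implications separately, leaning on the fact recalled in Section~\ref{Free states} that a state is representable as a substate of a Slater determinant state precisely when it is gauge-invariant quasi-free (GIQF) with a $1$-pdm of finite rank. In the forward direction (free $\Rightarrow$ (i) and (ii)), condition (i) is built into the definition of freeness, so the only work is to manufacture the approximating substates required for (ii). The reverse direction ((i) and (ii) $\Rightarrow$ free) is really a closedness statement: the GIQF property should survive trace-norm limits once the finite-trace constraint (i) is imposed.

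For the forward implication, I would let $\Gamma$ be free with $1$-pdm $\gamma$, diagonalize $\gamma$ in its natural orbital basis with occupation numbers $\lambda_1 \ge \lambda_2 \ge \cdots$ satisfying $\sum_j \lambda_j = \Tr\gamma < \infty$, and invoke the product representation of a GIQF state over its modes, $\Gamma = \bigotimes_j \rho_j$ with $\rho_j = \lambda_j\,|1\ranglelangle 1| + (1-\lambda_j)\,|0\ranglelangle 0|$. I would then define $\Gamma_N$ by truncating the $1$-pdm to its top $N$ eigenvalues, i.e.\ replacing $\rho_j$ by the vacuum mode $|0\ranglelangle 0|$ for $j > N$. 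Each $\Gamma_N$ is GIQF with a finite-rank $1$-pdm, hence a substate of a Slater determinant by the recalled fact. The telescoping estimate for product states,
\[
\|\Gamma - \Gamma_N\|_1 \ \le \ \sum_{j>N}\big\|\rho_j - |0\ranglelangle 0|\big\|_1 \ = \ 2\sum_{j>N}\lambda_j ,
\]
then tends to $0$ as $N \to \infty$ because $\sum_j \lambda_j < \infty$, which establishes (ii).

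For the converse, I would suppose $\Gamma$ satisfies $\Tr\gamma_\Gamma < \infty$ and $\Gamma = \lim_k \Gamma_k$ in trace norm, with each $\Gamma_k$ a substate of a Slater determinant and hence GIQF. Since the creation and annihilation operators, and therefore every monomial built from them, are bounded on $\FFF$, trace-norm convergence gives $\Tr(\Gamma_k A) \to \Tr(\Gamma A)$ for each such monomial $A$; in particular the two-point functions converge, $\langle g | \gamma_{\Gamma_k} f \rangle \to \langle g | \gamma_\Gamma f \rangle$. Each $\Gamma_k$ obeys the gauge-invariant Wick (determinantal) factorization of all its correlation functions in terms of its two-point function, and since a determinant of fixed size is continuous in its entries, passing to the limit shows that $\Gamma$ obeys the same factorization with two-point function $\langle g | \gamma_\Gamma f \rangle$. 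Because a normal state is determined by its correlation functions, $\Gamma$ must be the GIQF state with $1$-pdm $\gamma_\Gamma$; combined with (i) this makes $\Gamma$ free.

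I expect the main obstacle to lie in the convergence step of the forward implication: one must carefully justify the mode-product representation of a GIQF state of finite average particle number and control the resulting infinite tensor product, extending the finite telescoping bound to the infinite product by a Cauchy-sequence argument in trace norm. A secondary point requiring care in the converse is that trace-norm convergence of the Fock-space operators $\Gamma_k$ does not on its own control $\Tr\gamma_{\Gamma_k}$, since the number operator is unbounded; this is precisely why hypothesis (i) must be assumed rather than derived from (ii).
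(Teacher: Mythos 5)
Your proof is correct and follows essentially the same route as the paper: both directions hinge on truncating the natural occupation numbers to obtain finite-rank free approximants, and on the closure of the gauge-invariant Wick factorization under pointwise convergence of correlation functions. The only differences are cosmetic --- you bound $\|\Gamma-\Gamma_N\|_1$ by a telescoping product estimate (equivalently, total-variation control of the product weights) where the paper uses $\ell^1$-convergence of the diagonal Fock-basis coefficients of the structural formula (\ref{Free density}), and you cite the substate-of-a-Slater-determinant characterization of finite-rank free states where the paper constructs the dilating Slater determinant $\bigwedge_j\big(\sqrt{p_j}\,k_j+\sqrt{1-p_j}\,k'_j\big)$ explicitly.
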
 

Free states whose natural occupation numbers are all strictly positive and less than $1$ have the form of (grand canonical) Gibbs states for noninteracting fermions \cite{GottliebMauserArchived}.   
Let $\hat{a}_1,\hat{a}_2,\ldots$ denote the fermionic annihilation operators associated to a complete orthonormal system of reference orbitals, 
so that $\hat{n}_i = \hat{a}^*_i \hat{a}_i$ represents the observable of ``the number of fermions in the $i^{th}$ orbital" (either $0$ or $1$).  
Any density operator $\Gamma$ of the Gibbs form 
\begin{equation}
      \Gamma \ \propto \  \exp\big( - \sum \lambda_i \hat{n}_i\big)
 \label{Gibbs form}
\end{equation}
is free.  
Formula (\ref{Gibbs form}) defines a density operator if and only if $\sum e^{-\lambda_i} < \infty$ 
because the trace of the operator on the right-hand side of the formula equals $\prod\big(1+e^{-\lambda_i} \big)$.  
The reference orbitals of the density operator $\Gamma$ defined by (\ref{Gibbs form}) are its natural orbitals.  
Its natural occupation numbers, the average values of the observables $\hat{n}_i$, are 
$ p_i = e^{-\lambda_i}/(1+e^{-\lambda_i}) $.
For later use we note here that $\log\Gamma$ is the  ``quadratic Hamiltonian" \cite{BachLiebSolovej} operator 
\begin{equation}
\log \Gamma  
      \equals 
      \sum_i \big( \log(p_i) \hat{a}^*_i \hat{a}_i +  \log(1-p_i) \hat{a}_i \hat{a}^*_i \big) \ . 
 \label{quadratic}
\end{equation}

Free states are characterized by statistically independent occupation of their natural orbitals.  
In Section~\ref{Free states appendix} we show that a density operator $\Gamma$ on $\FFF(\HH)$ is free if and only if orthogonal natural orbitals are occupied independently of one another.   
For example, in the free Gibbs state (\ref{Gibbs form}) the expected value of the occupation observables $\hat{n}_i$ and $\hat{n}_j$ are $p_i$ and $p_j$, respectively, while the expected value of  $\hat{n}_i\hat{n}_j$, i.e., the probability that the $i^{th}$ and $j^{th}$ orbitals are both occupied, equals $p_ip_j$ (assuming $i \ne j$).

In order to define nonfreeness and related correlation functionals, we will require the following well-known fact \cite{BachLiebSolovej,AlickiFannes}, which we will also prove in Section~\ref{Free states appendix}.

\begin{remark}
\label{free existence remark}
Suppose $Q:\HH \longrightarrow \HH$  is a positive-semidefinite contraction operator with finite trace.  
Then there exists a unique free density operator on $\FFF(\HH)$ with $1$-pdm $Q$.  
\end{remark}

The von Neumann entropy of a free state is a simple function of its natural occupation numbers $p_i$.  
The following formula can be established using Proposition~\ref{structural} in Section~\ref{Free states appendix}.
  
\begin{remark}
\label{quasifree entropy remark}
 If  $\Gamma$ is a free state with natural occupation numbers $p_i$, then its von Neumann entropy is 
\[
   S(\Gamma) \ = \ -\sum_i p_i \log p_i -\sum_i (1-p_i) \log (1-p_i) \ .  
\]  
\end{remark}


\section{Relative entropy correlation functionals}
\label{Relative entropy correlation functionals}

Recall that $\DD(\FFF)$ denotes the set of density operators on the fermion Fock space $\FFF = \FFF(\HH)$ that represent states of finite average particle number.  
The $1$-pdm of a density operator $\Delta \in \DD(\FFF)$ is a positive-semidefinite contraction operator on $\HH$ with finite trace.  
By Remark~\ref{free existence remark}, there exists a unique free density operator with the same $1$-pdm as $\Delta$.
We denote this free density operator by $\Gamma_\Delta$.   
In other words, $\Gamma_\Delta$ denotes the unique free density operator such that $\gamma_{\Gamma_\Delta} = \gamma_\Delta$ (with the notation defined in formula (\ref{defining property of $1$-pdm})).  
 
The nonfreeness $\mathfrak{C}(\Delta)$ of $\Delta$ is defined to be the entropy of $\Delta$ relative to $\Gamma_\Delta$, that is, 
\begin{equation}
\label{nonfreeness intro def} 
\mathfrak{C}(\Delta) \equals  S(\Delta \|\Gamma_\Delta) \ .
\end{equation} 
This equals $\Tr(\Delta \log\Delta) - \Tr(\Delta \log\Gamma_{\Delta})$ provided that $\Tr(\Delta \log\Delta) > -\infty$.  

Correlation functionals closely related to nonfreeness are obtained by using other ``divergences" instead of the relative entropy to compare the states $\Delta$ and $\Gamma_\Delta$.  Using a divergence that enjoys the properties of additivity and monotonicity will yield a correlation functional with those properties.  
We have in mind the R\'enyi divergences 
\[
D_\alpha(\rho \| \sigma) \equals \frac{1}{\alpha - 1} \log \Tr (\rho^\alpha \sigma^{1-\alpha})
\]
for $0 < \alpha \le 2$ and the ``sandwiched" relative R\'enyi entropies \cite{Mueller-Lennert...,WildeWinterYang,FrankLieb} 
\[
\widetilde{D}_\alpha(\rho \| \sigma)  \equals \frac{1}{\alpha - 1} \log \Tr \Big(\big(\sigma^{\frac{1-\alpha}{2\alpha}} \rho \sigma^{\frac{1-\alpha}{2\alpha}}\big)^\alpha\Big)
\]
for $\alpha \ge \tfrac12$.   The divergences $D_1$ and $\widetilde{D}_1$ are defined by taking limits $\alpha \longrightarrow 1$ and both equal the relative entropy $S(\rho \| \sigma)$.

For values of $\alpha$ in the appropriate ranges, the correlation functionals  
\begin{eqnarray*}
\mathfrak{C}_\alpha( \Delta )  &=&  D_\alpha(\Delta \|\Gamma_{\Delta} ) \\
\widetilde{\mathfrak{C}}_\alpha( \Delta ) &=&   \widetilde{D}_\alpha(\Delta \|\Gamma_{\Delta} )  
\end{eqnarray*} 
all share the following properties with the nonfreeness functional $\mathfrak{C} = \mathfrak{C}_1 = \widetilde{\mathfrak{C}}_1$ :
\begin{trivlist}
\item{(i)} 
they take only non-negative values, possibly $+\infty$,
\item{(ii)}
they assign the value $0$ to all Slater determinant states,
\item{(iii)}
they are monotone with respect to restriction of states,
\item{(iv)} 
they are additive over independent subsystems, and 
\item{(v)}
they are invariant under changes of the $1$-particle basis.
\end{trivlist}

The sandwiched relative R\'enyi entropy $\widetilde{D}_{1/2}$ equals twice the negative logarithm of ``fidelity," 
and the corresponding correlation functional $\widetilde{\mathfrak{C}}_{1/2}$ is the ``new measure" of correlation we proposed in Ref.~\cite{GottliebMauser2005}.


\section{Special properties of nonfreeness}
\label{Special properties of nonfreeness}

Due to its definition in terms of von Neumann entropy, nonfreeness has some intuitively appealing properties 
that the other relative-entropy-type correlation functionals do not share.

The nonfreeness of a many-fermion density operator $\Delta$ has been defined to be its entropy relative to the 
associated free state $\Gamma_\Delta$.  Prop.~\ref{max ent prop} states that the nonfreeness of $\Delta$ equals the difference between the von Neumann entropies of $\Gamma_\Delta$ and $\Delta$.  
The nonfreeness $\Cor(\Delta)$ may be defined without direct reference to $\Gamma_\Delta$, by  
$$ \Cor(\Delta) \equals \min \big\{S(\Delta\|\Gamma ) : \ \Gamma  \ \hbox{is free} \big\} $$
because  the minimum relative entropy is actually attained at $\Gamma=\Gamma_{\Delta}$, as shown in Prop.~\ref{superproposition} below.

\subsection{Simple formulas for nonfreeness}
\label{Simple formulas for nonfreeness}

Recall that $\DD(\FFF)$ denotes the set of density operators on the fermion Fock space $\FFF$ that represent states of finite average particle number.  

\begin{lemma} 
\label{good lemma}
Suppose $\Delta \in \DD(\FFF)$ and let $\Gamma_{\Delta}$ denote the unique free state that has the same $1$-pdm as $\Delta$.
If $\Gamma$ is free then 
\begin{equation}
\label{conclusion}  -  \Tr(\Delta \log\Gamma )   \equals - \Tr(\Gamma_\Delta \log\Gamma )\ . \
\end{equation}

\end{lemma}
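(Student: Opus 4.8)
The plan is to exploit the fact that the logarithm of a free state is a \emph{one-body} (gauge-invariant, quadratic) operator, so that its expectation value in any density operator is completely determined by that density operator's $1$-pdm. Since $\Delta$ and $\Gamma_\Delta$ have, by construction, the same $1$-pdm, the two traces in (\ref{conclusion}) must coincide. In other words, the whole content of the lemma is the assertion that the map $\Delta \mapsto \Tr(\Delta \log\Gamma)$ factors through $\Delta \mapsto \gamma_\Delta$.

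To make this precise I would diagonalize $\Gamma$. Let $\hat a_i$ be the annihilation operators attached to the natural orbitals $e_i$ of $\Gamma$, with natural occupation numbers $p_i$. Formula (\ref{quadratic}) gives
\[
\log\Gamma \equals \sum_i \big( \log(p_i)\,\hat a_i^* \hat a_i + \log(1-p_i)\,\hat a_i \hat a_i^* \big),
\]
and the canonical anticommutation relation $\hat a_i \hat a_i^* = I - \hat a_i^* \hat a_i$ rewrites this (on the support of $\Gamma$) as
\[
\log\Gamma \equals \Big(\sum_i \log(1-p_i)\Big) I \plus \sum_i \log\tfrac{p_i}{1-p_i}\,\hat a_i^* \hat a_i ,
\]
a scalar multiple of the identity plus a diagonal one-body operator. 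Taking the expectation against $\Delta$ and using the defining property (\ref{defining property of $1$-pdm}) of the $1$-pdm, namely $\Tr(\Delta\,\hat a_i^*\hat a_i) = \langle e_i \,|\, \gamma_\Delta\, e_i\rangle$, together with $\Tr(\Delta)=1$, I would obtain
\[
-\Tr(\Delta \log\Gamma) \equals -\sum_i \log(1-p_i) \minus \sum_i \log\tfrac{p_i}{1-p_i}\,\langle e_i \,|\, \gamma_\Delta\, e_i\rangle .
\]
Because $\gamma_\Delta = \gamma_{\Gamma_\Delta}$, each diagonal element $\langle e_i|\gamma_\Delta\, e_i\rangle$ is unchanged when $\Delta$ is replaced by $\Gamma_\Delta$, so the identical computation applied to $\Gamma_\Delta$ produces the identical right-hand side, which is (\ref{conclusion}).

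The main obstacle is purely one of convergence and unboundedness: $\log\Gamma$ is unbounded, the defining series runs over infinitely many orbitals, and the coefficients $\log\tfrac{p_i}{1-p_i}$ and $\log(1-p_i)$ diverge at the degenerate occupations $p_i\in\{0,1\}$, so the rearrangement above and the term-by-term passage of the trace through the sum require justification. The key structural point, however, is that the one-body formula expresses \emph{both} $-\Tr(\Delta\log\Gamma)$ and $-\Tr(\Gamma_\Delta\log\Gamma)$ through the very same data — the fixed coefficients of $\Gamma$ and the common occupations $\langle e_i|\gamma_\Delta\, e_i\rangle=\langle e_i|\gamma_{\Gamma_\Delta}\, e_i\rangle$ — so that, read as an extended-real-valued sum, the two sides are identical term by term (a divergent coefficient times a nonzero occupation forces $+\infty$ on both simultaneously). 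The remaining work is to verify that this formal sum really computes $-\Tr(\Delta\log\Gamma)$. I would do this by partitioning the natural orbitals of $\Gamma$ into empty ($p_i=0$), always-filled ($p_i=1$, a finite set since $\Tr\gamma_\Gamma<\infty$), and mixed ($0<p_i<1$) blocks. On the mixed block $\sum_i p_i<\infty$ makes the constant $\sum_i\log(1-p_i)$ convergent (as $\log(1-p_i)\sim -p_i$) and the one-body series absolutely convergent, so the trace may be taken term by term and the computation is rigorous; the empty and always-filled blocks reduce to the support condition for relative entropy, under which $-\Tr(\Delta\log\Gamma)$ and $-\Tr(\Gamma_\Delta\log\Gamma)$ are simultaneously $+\infty$, or else those orbitals contribute trivially and identically to both sides. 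This reduction to a truncated, absolutely convergent computation together with a support dichotomy for the degenerate occupations is where the analytic care resides.
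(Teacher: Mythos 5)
Your approach is the same as the paper's: $\log\Gamma$ is quadratic in the creators and annihilators, so $\Tr(\cdot\,\log\Gamma)$ factors through the $1$-pdm, and $\gamma_\Delta=\gamma_{\Gamma_\Delta}$ finishes the job. The paper gives exactly this argument in the main text for the case $0<p_i<1$, and, like you, it isolates the degenerate occupations $p_i\in\{0,1\}$ as the place where care is needed. On the mixed block your convergence remarks are sound (with the caveat that your grouping into a scalar $\sum_i\log(1-p_i)$ plus a one-body part already breaks down when some $p_i=1$; the grouping $-\log(p_i)\,\hat a_i^*\hat a_i-\log(1-p_i)\,\hat a_i\hat a_i^*$ of (\ref{quadratic}) keeps all terms nonnegative and lets Tonelli do the work).

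The gap is in the sentence asserting that the empty and always-filled blocks ``reduce to the support condition for relative entropy, under which $-\Tr(\Delta\log\Gamma)$ and $-\Tr(\Gamma_\Delta\log\Gamma)$ are simultaneously $+\infty$.'' By the paper's definition, $-\Tr(\Delta\log\Gamma)=+\infty$ precisely when $\ker\Gamma\not\subset\ker\Delta$ --- a condition on kernels in \emph{Fock space} --- whereas your divergence criterion (some $i$ with $p_i=0$ and $\langle e_i|\gamma_\Delta e_i\rangle>0$, or $p_i=1$ and $\langle e_i|\gamma_\Delta e_i\rangle<1$) is a condition on the $1$-pdm. You need, and do not prove, the equivalence of the two: for free $\Gamma$, $\ker\Gamma\subset\ker\Delta$ holds if and only if $\ker\gamma_\Gamma\subset\ker\gamma_\Delta$ and $\ker(I-\gamma_\Gamma)\subset\ker(I-\gamma_\Delta)$. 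Only this equivalence lets you conclude that the support condition holds for $\Delta$ iff it holds for $\Gamma_\Delta$ (since they share a $1$-pdm), and it is also what justifies replacing the sum of $\langle\bn|\Delta|\bn\rangle$ over configurations in the support of $\Gamma$ with $\bn(i)=1$ by the full diagonal element $\langle e_i|\gamma_\Delta e_i\rangle$ in the term-by-term trace. This equivalence is exactly the content of the paper's Lemma~\ref{lemma fermions} and its corollaries, proved by describing $\ker\Gamma$ explicitly as the closed span of the Fock basis vectors whose occupation lists violate the deterministic occupations of $\Gamma$, and invoking the identity $\langle h_i|\gamma_\Delta h_i\rangle=\sum_{\bn:\,\bn(i)=1}\langle\bn|\Delta|\bn\rangle$. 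It is not deep, but it is the one genuinely nontrivial step, and your sketch takes it for granted.
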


\begin{proof}  
We prove this here for the case where $\Gamma$ is a free Gibbs state, i.e., when all natural occupation numbers $p_i$ of $\Gamma$ are strictly between $0$ and $1$.  
The proof is simple in this case because the operator $\log \Gamma$ is then quadratic in the creators and annihilators, while $\Delta$ and $\Gamma_\Delta$, having the same $1$-pdm, assign the same expectations to all such operators.  The general case where some of the $p_i$ may equal $0$ or $1$ requires some care and is handled in Section~\ref{Proof of Lemma 1}.

Suppose 
$\log \Gamma$ is the quadratic Hamiltonian operator 
(\ref{quadratic}).  
By the defining property (\ref{defining property of $1$-pdm}) of the  $1$-pdm $\gamma_\Delta$, 
\begin{eqnarray*}
- \Tr (\Delta \log \Gamma ) & = & 
- \sum_i \big( \log(p_i) \Tr (\Delta \hat{a}^*_i \hat{a}_i) -  \log(1-p_i) \Tr (\Delta \hat{a}_i \hat{a}^*_i)  \big)
\\
& = & 
- \sum_i \big( \log(p_i) \langle h_i | \gamma_\Delta h_i \rangle -  \log(1-p_i) (1 -  \langle h_i | \gamma_\Delta f_i \rangle ) \big).
\end{eqnarray*}
Since $\gamma_\Delta$ is also the $1$-pdm of $\Gamma_\Delta$, the conclusion (\ref{conclusion}) follows.
\end{proof}  

When the von Neumann entropy $S(\Delta) = -\Tr (\Delta \log\Delta)$ is finite 
we may use the formula
\begin{equation}
\label{we may use}
S(\Delta \| \Gamma_{\Delta}) \equals - \Tr(\Delta \log\Gamma_{\Delta}) - S(\Delta)
\end{equation}
for the relative entropy.  This leads to simple formulas for nonfreeness.

\begin{proposition}
\label{max ent prop}
Suppose $\Delta \in \DD(\FFF)$ satisfies $S(\Delta) < \infty$.
Let $\Gamma_{\Delta}$ denote the unique free density operator with the same $1$-pdm as $\Delta$.   
Then, 
\begin{equation}  
\label{basic formula}
     S(\Delta\|\Gamma_{\Delta}) \equals S(\Gamma_\Delta)-S(\Delta) \ .
\end{equation}
\end{proposition}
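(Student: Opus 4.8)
The plan is to reduce the identity to Lemma~\ref{good lemma} by way of the relative-entropy splitting (\ref{we may use}). Because $S(\Delta) < \infty$ by hypothesis, the term $\Tr(\Delta \log \Delta) = -S(\Delta)$ is finite, and this is exactly what licenses writing the relative entropy in the split form
\[
   S(\Delta \| \Gamma_\Delta) \equals -\Tr(\Delta \log \Gamma_\Delta) \minus S(\Delta).
\]
Once this splitting is in hand, the only quantity left to understand is the cross term $-\Tr(\Delta \log \Gamma_\Delta)$, which I would aim to identify with the von Neumann entropy $S(\Gamma_\Delta)$.

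First I would note that $\Gamma_\Delta$ is itself a free state, so Lemma~\ref{good lemma} applies with the free reference state $\Gamma$ chosen to be $\Gamma_\Delta$ itself. The lemma then gives $-\Tr(\Delta \log \Gamma_\Delta) = -\Tr(\Gamma_\Delta \log \Gamma_\Delta)$, and the right-hand side is by definition $S(\Gamma_\Delta)$. Substituting this into the splitting above yields $S(\Delta\|\Gamma_\Delta) = S(\Gamma_\Delta) - S(\Delta)$, which is (\ref{basic formula}). The genuine content of the argument therefore resides entirely in the lemma: since $\log \Gamma_\Delta$ is quadratic in the creators and annihilators (formula (\ref{quadratic})), and since $\Delta$ and $\Gamma_\Delta$ share the same $1$-pdm and hence assign equal expectations to every such quadratic operator, $\Delta$ cannot be distinguished from $\Gamma_\Delta$ by the observable $-\log \Gamma_\Delta$.

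The hard part is not the algebra but the bookkeeping of possibly infinite quantities, and this is where I would take the most care. The assumption $S(\Delta) < \infty$ is used essentially: it is what permits the two separately infinite contributions $\Tr(\Delta \log\Delta)$ and $-\Tr(\Delta \log\Gamma_\Delta)$ to be separated in (\ref{we may use}). By contrast $S(\Gamma_\Delta)$ need not be finite, since by Remark~\ref{quasifree entropy remark} it equals the binary-entropy sum $-\sum_i p_i\log p_i -\sum_i (1-p_i)\log(1-p_i)$ in the natural occupation numbers $p_i$, and this sum can diverge even though $\sum_i p_i = \Tr(\gamma_\Delta) < \infty$. When it diverges, both sides of (\ref{basic formula}) are $+\infty$ and the identity is to be read in the extended sense on $[0,+\infty]$. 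I expect this interpretive point to be the only real obstacle, for the algebraic heart of the proof is nothing more than the single substitution $\Gamma = \Gamma_\Delta$ into the already-established Lemma~\ref{good lemma}.
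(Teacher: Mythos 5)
Your proposal is correct and follows exactly the paper's own argument: apply Lemma~\ref{good lemma} with $\Gamma = \Gamma_\Delta$ to identify the cross term $-\Tr(\Delta\log\Gamma_\Delta)$ with $S(\Gamma_\Delta)$, then substitute into the splitting (\ref{we may use}), which is licensed by $S(\Delta)<\infty$. Your added remarks about reading the identity in the extended sense when $S(\Gamma_\Delta)=+\infty$ are a sensible clarification but do not change the substance.
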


\begin{proof}
By Lemma~\ref{good lemma}, 
\[
     -\Tr( \Delta \log \Gamma_{\Delta} )  \equals  -\Tr( \Gamma_{\Delta} \log \Gamma_{\Delta} ) \equals 
    S(\Gamma_\Delta ).
\]
Substituting $S(\Gamma_\Delta )$ for $-\Tr( \Delta \log \Gamma_{\Delta} )$ in equation (\ref{we may use}) yields (\ref{basic formula}).
\end{proof}

By Remark~\ref{quasifree entropy remark}, the von Neumann entropy of the free state $\Gamma_\Delta$ is a function of its natural occupation numbers.  
The natural occupation numbers of $\Gamma_{\Delta}$ are the same as those of $\Delta$, since they have the same $1$-pdm; therefore, using (\ref{basic formula}) we obtain the following simple formula for nonfreeness:
\begin{corollary}
\label{explicit nonfreeness formula}
Suppose $\Delta \in \DD(\FFF)$ is a density operator on the fermion Fock space, and let $p_1,p_2,\ldots$ denote the eigenvalues of its 1-pdm.
If $S(\Delta) < \infty$ then 
\[  
       S( \Delta \| \Gamma_{\Delta} )  \equals - \sum p_j \log p_j - \sum (1-p_j) \log(1-p_j)  - S(\Delta)\ .
\]
\end{corollary}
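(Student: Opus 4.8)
The plan is to deduce this directly from Proposition~\ref{max ent prop} together with Remark~\ref{quasifree entropy remark}, so that essentially no new work is required beyond a substitution. First I would apply Proposition~\ref{max ent prop}, whose hypotheses ($\Delta \in \DD(\FFF)$ and $S(\Delta) < \infty$) are precisely those of the corollary, to write
\[
S(\Delta \| \Gamma_\Delta) \equals S(\Gamma_\Delta) - S(\Delta)\ .
\]
This reduces the problem to evaluating the von Neumann entropy of the free reference state $\Gamma_\Delta$ in terms of the eigenvalues $p_1, p_2, \ldots$ of the $1$-pdm of $\Delta$.

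The key observation for the second step is that $\Gamma_\Delta$ is, by definition, the unique free density operator whose $1$-pdm equals $\gamma_\Delta$. In particular $\Gamma_\Delta$ and $\Delta$ share the same $1$-pdm, hence the same natural occupation numbers; thus the numbers $p_1, p_2, \ldots$ named in the corollary (the eigenvalues of $\gamma_\Delta$) are exactly the natural occupation numbers of the free state $\Gamma_\Delta$. Remark~\ref{quasifree entropy remark} then applies verbatim to give
\[
S(\Gamma_\Delta) \equals -\sum_j p_j \log p_j - \sum_j (1-p_j)\log(1-p_j)\ .
\]
Substituting this expression for $S(\Gamma_\Delta)$ into the displayed consequence of Proposition~\ref{max ent prop} yields the claimed formula.

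Since both ingredients are already in hand, there is no substantive obstacle; the only point deserving a word of care is the possibility that $S(\Gamma_\Delta) = +\infty$. Both the identity of Proposition~\ref{max ent prop} and the formula of Remark~\ref{quasifree entropy remark} are valid as statements in $[0, +\infty]$, and because $S(\Delta)$ is finite by hypothesis the subtraction $S(\Gamma_\Delta) - S(\Delta)$ is unambiguous in every case. Hence the final formula holds with both sides simultaneously finite or simultaneously infinite, and no separate argument is required to treat the case of infinite nonfreeness.
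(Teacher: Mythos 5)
Your proposal is correct and follows exactly the paper's own route: apply Proposition~\ref{max ent prop} to get $S(\Delta\|\Gamma_\Delta)=S(\Gamma_\Delta)-S(\Delta)$, then note that $\Gamma_\Delta$ and $\Delta$ share the same natural occupation numbers and invoke Remark~\ref{quasifree entropy remark} to evaluate $S(\Gamma_\Delta)$. Your closing remark about the case $S(\Gamma_\Delta)=+\infty$ is a sensible (and harmless) extra precaution that the paper leaves implicit.
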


\subsection{Nonfreeness as relative entropy mimimum}
\label{Nonfreeness as relative entropy mimimum}

The nonfreeness $\mathfrak{C}(\Delta)$ of a many-fermion state $\Delta$ is equal to the minimum of its entropy relative to all free reference states.  
To prove this we will use the inequality 
\begin{equation}
\label{an entropy inequality}
  S( A ) \ \le\  -\Tr(A \log B) 
\end{equation}
for two density operators on the same Hilbert space.  
In case $S(A) < \infty$, then 
\begin{equation}
\label{if entropy is finite}
S( A \| B) = -  \Tr ( A \log B ) - S(A)
\end{equation}
and (\ref{an entropy inequality}) follows immediately from the fact that $S( A \| B) \ge 0$.  

\begin{proposition}
\label{superproposition}
Suppose $\Delta \in \DD(\FFF)$ and let $\Gamma_{\Delta}$ denote the unique free density operator with the same $1$-pdm as $\Delta$.   
Then, for all free density operators $\Gamma $,
\begin{equation} 
\label{superprop-equation}
     S(\Delta\|\Gamma_{\Delta}) \ + \ S(\Gamma_{\Delta}\|\Gamma ) \ = \ S(\Delta\|\Gamma  ), 
\end{equation}
and therefore 
\begin{equation}  
\label{superprop-inequality}
     S(\Delta\|\Gamma_{\Delta}) \ \le \ S(\Delta\|\Gamma  )
\end{equation}
with equality only if $S(\Delta\|\Gamma_\Delta) = \infty$ or $\Gamma=\Gamma_\Delta$. 
\end{proposition}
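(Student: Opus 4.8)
The plan is to prove the additive (Pythagorean) identity \eqref{superprop-equation} first, and then to read off \eqref{superprop-inequality} and its equality clause from two standard facts about relative entropy: that it is non-negative, and that $S(\rho\|\sigma)=0$ forces $\rho=\sigma$ (faithfulness). The whole identity is powered by Lemma~\ref{good lemma}: because $\Delta$ and $\Gamma_\Delta$ share a $1$-pdm, they assign a common value to $-\Tr(\,\cdot\,\log\Gamma)$ for \emph{every} free $\Gamma$, and in particular, taking $\Gamma=\Gamma_\Delta$, a common value to $-\Tr(\,\cdot\,\log\Gamma_\Delta)$.

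First I would dispatch the case in which $S(\Delta)$ and $S(\Gamma_\Delta)$ are both finite, where everything is explicit. Here formula \eqref{if entropy is finite} lets me write $S(\Delta\|\Gamma)=-\Tr(\Delta\log\Gamma)-S(\Delta)$ together with its analogue for $\Gamma_\Delta$; subtracting these and applying Lemma~\ref{good lemma} to each cross term gives
\begin{equation*}
S(\Delta\|\Gamma)-S(\Delta\|\Gamma_\Delta) \equals \Tr(\Delta\log\Gamma_\Delta)-\Tr(\Delta\log\Gamma) \equals \Tr(\Gamma_\Delta\log\Gamma_\Delta)-\Tr(\Gamma_\Delta\log\Gamma) \equals S(\Gamma_\Delta\|\Gamma).
\end{equation*}
The term $\Tr(\Delta\log\Delta)=-S(\Delta)$ cancels precisely because it is finite, and rearranging produces \eqref{superprop-equation}. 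Conceptually this is the statement that $\Gamma_\Delta$ is the information projection of $\Delta$ onto the exponential family of free (Gibbs-type) states, for which a Pythagorean law is exactly what one expects.

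The hard part will be justifying the cancellation of $\Tr(\Delta\log\Delta)$ when $S(\Delta)=\infty$, where the naive subtraction is the indeterminate form $\infty-\infty$, and accommodating free states $\Gamma$ that are not Gibbs states, whose $\log\Gamma$ is singular on $\ker\Gamma$ (the same subtlety flagged in the proof of Lemma~\ref{good lemma}). My plan is to avoid isolating $\Tr(\Delta\log\Delta)$ at all by using the change-of-reference identity
\begin{equation*}
S(\Delta\|\Gamma) \equals S(\Delta\|\Gamma_\Delta) \plus \Tr\big(\Delta(\log\Gamma_\Delta-\log\Gamma)\big),
\end{equation*}
which holds as soon as $S(\Delta\|\Gamma_\Delta)<\infty$ and the final trace is well defined. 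Since $\log\Gamma_\Delta-\log\Gamma$ is (a limit of) operators quadratic in the creators and annihilators, the $1$-pdm matching of Lemma~\ref{good lemma} converts $\Tr(\Delta(\log\Gamma_\Delta-\log\Gamma))$ into $\Tr(\Gamma_\Delta(\log\Gamma_\Delta-\log\Gamma))=S(\Gamma_\Delta\|\Gamma)$, again yielding \eqref{superprop-equation}. The complementary case $S(\Delta\|\Gamma_\Delta)=\infty$ must be handled on its own: there I would show that $S(\Delta\|\Gamma)=\infty$ as well, so that both sides of \eqref{superprop-equation} equal $+\infty$ and the identity survives in $[0,\infty]$. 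This, together with making the change-of-reference identity rigorous across all support configurations of $\Gamma$, is where I expect the real work to lie; lower semicontinuity of relative entropy under a finite-rank truncation of $\Delta$ provides a fallback for passing to the limit.

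With \eqref{superprop-equation} in hand, the inequality \eqref{superprop-inequality} is immediate from $S(\Gamma_\Delta\|\Gamma)\ge 0$. For the equality clause I would argue as follows: if $S(\Delta\|\Gamma_\Delta)<\infty$, then equality in \eqref{superprop-inequality} lets me cancel it in \eqref{superprop-equation} to obtain $S(\Gamma_\Delta\|\Gamma)=0$, whence $\Gamma=\Gamma_\Delta$ by faithfulness; the only other way equality can hold is the degenerate possibility $S(\Delta\|\Gamma_\Delta)=\infty$. This is exactly the stated dichotomy.
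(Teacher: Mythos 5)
Your treatment of the finite-entropy case coincides with the paper's: expand both relative entropies via (\ref{if entropy is finite}), apply Lemma~\ref{good lemma} to the cross terms, cancel the finite $\Tr(\Delta\log\Delta)$, and read off (\ref{superprop-inequality}) and the equality clause from $S(\Gamma_\Delta\|\Gamma)\ge 0$ and faithfulness. (One small omission there: when $S(\Delta)<\infty$ but $S(\Delta\|\Gamma_\Delta)=\infty$ you still must show $S(\Delta\|\Gamma)=\infty$; this follows from $-\Tr(\Delta\log\Gamma)=-\Tr(\Gamma_\Delta\log\Gamma)\ge S(\Gamma_\Delta)=\infty$ using inequality (\ref{an entropy inequality}), which is exactly how the paper gets it.)

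The genuine gap is the case $S(\Delta)=\infty$. Your ``change-of-reference identity'' $S(\Delta\|\Gamma)=S(\Delta\|\Gamma_\Delta)+\Tr\big(\Delta(\log\Gamma_\Delta-\log\Gamma)\big)$ is, once Lemma~\ref{good lemma} converts the trace term into $S(\Gamma_\Delta\|\Gamma)$, literally the identity (\ref{superprop-equation}) you are trying to prove; asserting that it holds ``as soon as the final trace is well defined'' begs the question, since the usual derivation subtracts two expressions each containing $\Tr(\Delta\log\Delta)=-\infty$. Your proposed fallback does not repair this: lower semicontinuity yields only a one-sided bound $S(\Delta\|\Gamma)\le\liminf(\cdots)$, and truncating $\Delta$ itself changes its $1$-pdm and hence changes $\Gamma_\Delta$. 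The paper closes this case with a different device: it restricts $\Delta$, $\Gamma_\Delta$ and $\Gamma$ to the increasing filtration of finite-dimensional subalgebras $\BB_n\cong\BB(\FFF(\HH_n))$ determined by finite-rank projectors $P_n$ increasing to the identity, observes that $(\Gamma_\Delta)_n=\Gamma_{\Delta_n}$ (substates of free states are free and inherit the compressed $1$-pdm), applies the already-proven finite-dimensional identity (\ref{martingale}), and then invokes the martingale convergence property of relative entropy (Ohya--Petz, Cor.~5.12(iv)), which applies because $\bigcup\BB_n$ has bi-commutant equal to all of $\BB(\FFF(\HH))$; all three terms then converge to $S(\Delta\|\Gamma_\Delta)$, $S(\Gamma_\Delta\|\Gamma)$ and $S(\Delta\|\Gamma)$. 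Without this (or an equivalent monotone-convergence argument over an increasing net of subalgebras), your proof of (\ref{superprop-equation}) does not go through when $S(\Delta)=\infty$, and neither does your claim that $S(\Delta\|\Gamma_\Delta)=\infty$ forces $S(\Delta\|\Gamma)=\infty$ in that regime.
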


\begin{remark}
The analog of Prop.~\ref{superproposition} for other R\'enyi divergences would be false. 
 That is, if $\alpha \ne 1$ then $\Gamma_{\Delta}$ need not minimize $D_\alpha(\Delta \|\Gamma )$ or $\widetilde{D}_\alpha(\Delta \|\Gamma )$.   
 
 For example, let $\HH = \mathrm{span}\big\{| \uparrow\  \rangle,| \downarrow\  \rangle \big\}$ and let $\Delta$ denote the density operator that is entirely supported on the $1$-particle component of $\FFF(\HH)$, where it equals $\frac23 | \uparrow \ \rangle\!\langle \ \uparrow  | + \tfrac13 | \downarrow \ \rangle\!\langle \ \downarrow |$.   Then the minimum of $D_\alpha(\Delta \|\Gamma)$ or $\widetilde{D}_\alpha(\Delta \|\Gamma )$ is not attained at $\Gamma_\Delta$ unless $\alpha = 1$.
\end{remark}

\begin{proof}
We first prove Prop.~\ref{superproposition} under the assumption that $S(\Delta) < \infty$, which allows us to use formula (\ref{if entropy is finite}).  
Then we will relieve the assumption that $S(\Delta) < \infty$ by using the martingale property of relative entropy \cite{OhyaPetz}. 

Suppose that $S(\Delta) < \infty$.  Then 
\[
      S( \Delta \| \Gamma ) \ = \   -  \Tr ( \Delta \log \Gamma ) - S(\Delta)\ .
\]
 If $\Gamma$ is free, then 
\begin{eqnarray*}
S(\Delta \| \Gamma)  & = &  -\Tr(\Delta \log \Gamma) - S(\Delta) \equals  -\Tr( \Gamma_{\Delta} \log \Gamma ) - S(\Delta) \\
    & \ge & S(\Gamma_{\Delta}) - S(\Delta) \equals  -\Tr( \Delta \log \Gamma_{\Delta} ) - S(\Delta) \equals S(\Delta \| \Gamma_{\Delta} )  \ .
\end{eqnarray*}
The first and last equalities hold because $S(\Delta) < \infty$; the next-to-first and next-to-last equalities hold by  
Lemma~\ref{good lemma}; the inequality holds by (\ref{an entropy inequality}).   
This establishes (\ref{superprop-inequality}) when $S(\Delta) < \infty$.

If $S(\Delta \| \Gamma_{\Delta} ) = \infty$, then also $S(\Delta \| \Gamma)=  \infty$, as we have just established, and equation (\ref{superprop-equation}) holds trivially.   On the other hand, if $S(\Delta \| \Gamma_{\Delta} )$ and $S(\Delta)$ are both finite, then Prop.~\ref{max ent prop} implies that $S(\Gamma_\Delta )  < \infty$ 
and $S(\Delta\|\Gamma_{\Delta}) = S(\Gamma_\Delta)-S(\Delta)$.  By Lemma~\ref{good lemma}, 
\begin{eqnarray}
 S(\Delta \| \Gamma) - S(\Delta \| \Gamma_{\Delta} )
  & = & 
  -\Tr( \Delta \log \Gamma ) - S( \Gamma_{\Delta}  ) 
  \nonumber \\
  & = & 
      -\Tr( \Gamma_{\Delta} \log \Gamma ) - S( \Gamma_{\Delta}  ) 
      \equals 
  S(\Gamma_{\Delta} \| \Gamma), 
  \nonumber
 \end{eqnarray}
which is equivalent to equation (\ref{superprop-equation}).   Thus, equation (\ref{superprop-equation}) holds  
whether or not $S(\Delta \| \Gamma_{\Delta} )$ is finite, provided $S(\Delta ) < \infty$.  

Now assume that $S(\Delta) = \infty$.  The symbol $\BB(\mathcal{X})$ in the sequel denotes the algebra of bounded operators on a Hilbert space $\mathcal{X}$.

Consider an increasing sequence of finite-rank projectors $P_n$ on $\HH$ that converges strongly to the identity, and let $\HH_n$ denote the range of $P_n$.  
The finite-dimensional von Neumann algebras $\BB(\FFF(\HH_n))$ can be embedded into  $\BB(\FFF(\HH))$ as subalgebras, which we denote here by $\BB_n$.     
Let $\Delta_n$, $(\Gamma_{\Delta})_n$, and $\Gamma_n$ denote the density operators on $\FFF(\HH_n)$ that represent 
the restrictions of $\Delta$, $\Gamma_\Delta$, and $\Gamma$ to 
the corresponding substates delimited by $\HH_n$ (as defined in Section~\ref{Substates of many-fermion states}).
The density operator $(\Gamma_{\Delta})_n$ is free because it is a substate of a free state 
(see Section~\ref{Free states appendix}) 
 and it has the same $1$-pdm as $\Delta_n$, whence $(\Gamma_{\Delta})_n = \Gamma_{\Delta_n}$.  
Since $\Delta_n$ is a density operator on a finite-dimensional space, it has finite von Neumann entropy, and therefore 
\begin{equation}
\label{martingale}
     S(\Delta_n\|\Gamma_{\Delta_n}) \ + \ S(\Gamma_{\Delta_n}\| \Gamma_n ) \ = \ S(\Delta_n \|\Gamma_n  )
\end{equation}
by (\ref{superprop-equation}), as proven above.  
The norm closure of  $\bigcup \BB_n$ is equal to the CAR algebra \cite[Theorem 6.6]{AlickiFannes} in its Fock representation as a subalgebra of $\BB(\FFF(\HH))$.  The bi-commutant of $\bigcup \BB_n$, which equals that of its closure, is therefore all of $\BB(\FFF(\HH))$.  The ``filtration" $(\BB_n)_{n=1}^\infty$ thus satisfies the hypothesis of Cor.~5.12(iv) of Ref.~\cite{OhyaPetz}, and therefore the three terms in equation (\ref{martingale}) converge to $S( \Delta \| \Gamma_{\Delta} )$, $S( \Gamma_\Delta \| \Gamma )$, and $S( \Delta \| \Gamma )$ as $n \longrightarrow \infty$.  This establishes (\ref{superprop-equation}) even when $S(\Delta )$ is infinite.  
\end{proof}




\section{Appendices}
\label{Appendices}

The following nine appendices dilate on the technical background necessary for a thorough understanding of this review, and include a couple of deferred proofs. 
The appendices are titled:

\noindent \ref{Relative entropy for density operators} \quad Relative entropy for density operators

\noindent \ref{Fermion Fock spaces} \quad Fermion Fock spaces

\noindent \ref{Many-fermion states} \quad Many-fermion states

\noindent \ref{Substates of many-fermion states} \quad Substates of many-fermion states

\noindent \ref{1-particle density matrices}  \quad $1$-particle density matrices

\noindent \ref{Gauge-invariant quasi-free states}  \quad Gauge-invariant quasi-free states

\noindent \ref{Free states appendix}  \quad  Free states 

\noindent \ref{Proof of Proposition 1}  \quad  Proof of Proposition~\ref{free at last}

\noindent \ref{Proof of Lemma 1}  \quad  Proof of Lemma~\ref{good lemma}

\subsection{Relative entropy for density operators}
\label{Relative entropy for density operators}

The general definition of relative entropy for normal states on von Neumann algebras requires some modular theory \cite{OhyaPetz}.  
However, for density operators on a Hilbert space $\mathcal{X}$, which represent normal states on the von Neumann algebra of bounded operators on $\mathcal{X}$, 
a more elementary definition of relative entropy is available.

Let $A$ and $B$ denote density operators on a Hilbert space $\mathcal{X}$.
Let $\{\phi_1,\phi_2,\ldots\}$ and $\{\psi_1,\psi_2,\ldots\}$ be orthonormal bases of $\mathcal{X}$ consisting of eigenvectors of $A$ and $B$, 
respectively, with corresponding eigenvalues $p_i$ and $q_i$. 
We define 
$$\log A = \sum_{i: p_i > 0} \log(p_i) |\phi_i \rangle\!\langle \phi_i |\ ,$$   
a negative-semidefinite, but generally unbounded, operator.   
Note that $\log A$ is defined so that $\ker(\log A)=\ker ( A )$.  
The von Neumann entropy of $A$ is defined to be $ S(A) = -\Tr ( A \log A ) = -\sum p_i \log(p_i)$.  It may equal $+\infty$.  

We define $-\Tr ( A \log B )$ to be $+\infty$ if $\ker B \not\subset \ker A$,  otherwise, we define it by  
\[
     -\Tr ( A \log B ) \equals - \sum_i \sum_j | \langle \phi_i, \psi_j \rangle |^2 p_i \log q_j 
\]
as done in Ref.~\cite{ArakiLieb}.
We define the entropy of $A$ relative to $B$ by the formula
\[
     S( A \| B ) \equals \sum_i \sum_j | \langle \phi_i, \psi_j \rangle |^2 ( p_i \log p_i - p_i \log q_j + q_j - p_i) 
\]
as done in Ref.~\cite{Lindblad73}.  
The fact that the series defining $S( A \| B )$ has only nonnegative terms implies that 
\begin{equation}
\label{relative entropy}
      S( A \| B) \ = \   -  \Tr ( A \log B ) - S(A)
\end{equation}
if $S(A) < \infty$, 
and that 
\begin{equation}
\label{entropy inequality}
  S( A ) \ \le\  -\Tr(A \log B)
\end{equation}
even if $S(A)$ is infinite.  When $ S(A) = \infty$ formula (\ref{relative entropy}) cannot be used and $S( A \| B )$ may still be finite.


\subsection{Fermion Fock spaces}
\label{Fermion Fock spaces}

Let $\HH$ be a Hilbert space, the $1$-particle Hilbert space.  Unit vectors in $\HH$ are called ``orbitals."

The fermion Fock space over $\HH$, which we denote $\FFF(\HH)$, 
is the Hilbert space direct sum of alternating tensor powers of the $1$-particle Hilbert space $\HH$. 
That is, 
\begin{equation}
\label{Fock space}
   \FFF(\HH) \equals \CCC \oplus \HH \oplus \wedge^2 \HH \oplus \cdots \oplus \wedge^m \HH \oplus \cdots \cdots 
\end{equation}
where $\wedge^m \HH $ denotes the $m^{th}$ exterior (alternating tensorial) power of
$\HH$.  The first component of $\FFF(\HH)$ contains a distinguished unit vector $|\Omega \rangle$ called the ``vacuum vector."

The $m$-particle Hilbert space $\wedge^m \HH$ is the completion of the span of all tensor products $h_1 \wedge h_2 \wedge \ldots \wedge h_m$, 
where $h_1,\ldots,h_m$ are any $m>0$ vectors in $\HH$.  The tensors $h_1
\wedge h_2 \wedge \ldots \wedge h_m$ are formally
multilinear in $h_1,\ldots,h_m$ and satisfy
\[
     h_1 \wedge \cdots \wedge h_i \wedge \cdots \wedge h_j \wedge \cdots \wedge h_m
     \equals - h_1 \wedge \cdots \wedge h_j \wedge \cdots \wedge h_i \wedge \cdots \wedge h_m
\]
for $1\le i < j \le m$.  
In the context of $n$-electron systems, wedge products are usually called ``Slater determinants."  
The inner product of two Slater determinants is 
\[
       \langle h_1 \wedge \cdots \wedge h_{m} ,\ h'_1 \wedge \cdots \wedge h'_m \rangle
        \equals
        \det \big( \langle h_i, h'_j \rangle \big)_{ij=1}^{\ m}\ .
\]
This extends to an inner product on the linear span of the Slater determinants, and the completion of this linear span is the Hilbert space $\wedge^m \HH$.

Let $\BB(\FFF)$ denote the space of bounded operators on $\FFF = \FFF(\HH)$.   
Creation and annihilation operators $\hat{a}^*(h)$ and $\hat{a}(h)$ on $\FFF$ may be defined for each $h \in \HH$ as in Refs.~\cite{BratteliRobinson,AlickiFannes}.  
These creation and annihilation operators (creators and annihilators) are bounded operators that satisfy the canonical anticommutation relations and generate the Fock representation of the CAR algebra.   This is the uniform-norm closure in $\BB(\FFF)$ of the algebra of polynomials in the creators and annihilators.
We shall denote the CAR algebra over $\HH$ by  $\mathfrak{A}(\HH)$ and its Fock representation as a subalgebra of  $\BB(\FFF)$ by $\pi(  \mathfrak{A}(\HH) )$.  

The Fock representation of $\mathfrak{A}(\HH)$ on $\FFF(\HH)$ is irreducible, i.e., the commutant of $\pi(  \mathfrak{A}(\HH) )$ in $\BB(\FFF)$ is trivial.  Therefore the bi-commutant of $\pi(  \mathfrak{A}(\HH) )$, in which it is weakly dense, is all of $\BB(\FFF)$.  

Given an ordered orthonormal basis $(h_1,h_2,\ldots)$ of $\HH$, one can build an orthonormal basis of $\FFF(\HH)$, called a ``Fock basis"  
or ``occupation number" basis, using the orbitals $h_i$ as ``reference" orbitals.   The Fock basis vectors represent configurations of particles in the reference orbitals and are indexed by ``occupation lists" 
\[
     \bn = \big(\bn(1),\bn(2),\bn(3),\ldots \big)
\]
such that $\sum \bn(i) < \infty$, that is, such that the total number of particles in the configuration is finite.  
The set $$ \NN \equals \Big\{ \big(\bn(1),\bn(2),\bn(3),\ldots \big) :  \bn(i) \in \{ 0,1\},\  \sum \bn(i) < \infty \Big\}$$
indexes the possible configurations of fermions in the modes $(h_1,h_2,\ldots)$.     
The occupation list ${\bf 0} = (0,0,0,\ldots)$ is the index of the vacuum vector $|\Omega \rangle$, i.e.,  
$| {\bf 0}\rangle = |\Omega \rangle$.  
For $\bn \in \NN$ with $\sum \bn(i) > 0$, define the vector 
\begin{equation}
\label{Fock basis vector}
         |\bn\rangle \equals \hat{a}^*(h_1)^{\bn(1)}\hat{a}^*(h_2)^{\bn(2)} \cdots |\Omega \rangle
\end{equation}
(since the exponents $\bn(i)$ are eventually $0$, only finitely many creators appear to the left of $|\Omega\rangle$ in this formula).  
The orthonormal set $\big\{ |\bn\rangle :  \bn \in \NN \big\}$ is an orthonormal basis of $\FFF(\HH)$.  
It is the Fock basis defined with respect to the ordered orthonormal basis $(h_1,h_2,\ldots)$ of reference orbitals.

Though we have written the occupation lists as if they are sequences, all that is really required is a well-ordering of the set of reference orbitals, to give a definite order to the creators in formula (\ref{Fock basis vector}).   
Allowing a different kind of well-ordering facilitates the description of the isomorphism (\ref{basic isomorphism}) below.


\subsection{Many-fermion states}
\label{Many-fermion states}

We are considering many-fermion states that can be represented by density operators on the fermion Fock space $\FFF$.  
In the conventional formalism, physical observables correspond to self-adjoint operators on $\FFF$ and states correspond to 
certain linear functionals on $\BB(\FFF)$, the von Neumann algebra of bounded operators on $\FFF$.    
We are especially interested in the ``normal" states on $\BB(\FFF)$.   
A normal state on $\BB(\FFF)$ is a $\sigma$-weakly continuous linear functional 
$\omega:\BB(\FFF) \longrightarrow \CCC$ such that $\omega(I) = 1$ and $\omega(B) \ge 0$ for all positive-semidefinite $B \in \BB(\FFF)$.  
A density operator $\Delta$ on $\FFF$ describes a normal state $\omega$ on $\BB(\FFF)$ via the formula
$\omega(B) = \Tr(\Delta B) $.  Conversely, any normal state is represented in this manner by a density operator.  
 
Using the canonical anticommutation relations, polynomials in the creators and annihilators can be written as linear combinations of 
{\it normally ordered} monomials in the creators and annihilators.    
Since  $\pi(  \mathfrak{A}(\HH) )$ is $\sigma$-weakly dense in $\BB(\FFF)$, the correlations 
 \begin{equation}
 \label{correlations}
     \Tr\big(\Delta\ \hat{a}^*(f_1)\cdots \hat{a}^*(f_n)\hat{a}(g_m)\cdots \hat{a}(g_1) \big)
\end{equation}
for all $n,m \ge 0$ with $n+m>0$, and all $f_1,f_2,\ldots,f_n, g_1,\ldots,g_m \in \HH$, 
suffice to determine the density operator $\Delta$.  
That is, no other density operator can have all the same correlations (\ref{correlations}) as  $\Delta$.

A basic example of a many-fermion sate state is a Slater determinant state.  
Let $\Phi = h_1 \wedge h_2 \wedge \cdots \wedge h_n$ denote a Slater determinant vector in 
$\wedge^n \HH$, where $\{h_1 , h_2 ,\ldots, h_n\}$ is an orthonormal set in $\HH$.  
The density operator 
\begin{equation}
\label{Slater determinant state}
     0_{\CCC} \oplus \cdots \oplus 0_{\wedge^{n-1} \HH } \oplus |\Phi\rangle\!\langle \Phi |  \oplus 0_{ \wedge^{n+1} \HH } \oplus \cdots 
\end{equation}
defined relative to the decomposition (\ref{Fock space}) of $\FFF$ represents an $n$-particle ``Slater determinant state."  
We also think of the vacuum state $|\Omega \rangle\!\langle \Omega |$ as a $0$-particle Slater determinant state.


\subsection{Substates of many-fermion states}
\label{Substates of many-fermion states}

If $\HH_1$ is a closed subspace of the $1$-particle space $\HH$ and $\HH_2$ is its orthogonal complement, then the Fock space over $\HH$ is isomorphic to the tensor product of the Fock spaces over $\HH_1$ and $ \HH_2$.  
That is, if $\HH \cong \HH_1 \oplus \HH_2$, then 
\begin{equation}
\label{basic isomorphism}
   \FFF(\HH) \ \cong \ \FFF(\HH_1) \otimes \FFF(\HH_2)\ .
\end{equation}
 We shall write $\FFF_1$ for $\FFF(\HH_1)$,  $\FFF_2$ for $\FFF(\HH_2)$, and $\FFF$ for $\FFF(\HH)$

An isomorphism (\ref{basic isomorphism}) is easy to describe using Fock bases of  $\FFF_1$ and $\FFF_2$.  
Let $(f_1,f_2,\ldots)$ and $(g_1,g_2,\ldots)$ denote ordered orthonormal bases of $\HH_1$ and $\HH_2$, respectively.
Then $(f_1,f_2,\ldots,g_1,g_2,\ldots)$ is an ordered orthonormal basis of $\HH_1 \oplus \HH_2$.  
Occupation lists relative to $(f_1,f_2,\ldots,g_1,g_2,\ldots)$ 
are in one-to-one correspondence with pairs of occupation lists $(\bn_1,\bn_2) \in \NN_1 \times \NN_2$, where $\bn_1\in \NN_1$ is an occupation list relative to  $(f_1,f_2,\ldots)$ and $\bn_2\in \NN_2$ is an occupation list relative to  $(g_1,g_2,\ldots)$.  
The correspondence  
\begin{equation}
\label{isomorphism}
  |\bn\rangle \ \longleftrightarrow\   |\bn_1\rangle \otimes |\bn_2\rangle 
\end{equation}
extends to an isomorphism.

The algebra $\BB(\FFF_1)$ is isomorphic to a subalgebra of $\BB(\FFF_1 \otimes \FFF_2)  \cong \BB(\FFF)$ via the inclusion map 
$B \mapsto B \otimes I_2$, where $I_2$ denotes the identity operator on $ \FFF_2$.   
The embedding and isomorphism  
\begin{equation}
   \BB(\FFF_1) \ \hookrightarrow \  \BB(\FFF_1 \otimes \FFF_2)  \ \cong \ \BB(\FFF)\ ,
\label{embeddingandisomorphism}
\end{equation}
map the creation and annihilation operators $\hat{a}^*(f),\hat{a}(f) \in \BB(\FFF_1)$, defined for vectors $f \in \HH_1$, to the creation and annihilation operators in $\BB(\FFF)$ denoted the same way.   Let $\BB_1$ denote the isomorphic image of $\BB(\FFF_1)$ as a subalgebra of $\BB(\FFF)$.  
If $\dim(\HH_1) = d < \infty$ then $\BB_1$ is generated algebraically by the creators and annihilators and $\dim(\BB_1) = 2^d$.
If $\HH_1$ is infinite-dimensional then $\BB_1$ is the bi-commutant and weak closure of the algebra generated by the creators and annihilators pertaining to $\HH_1$.

A state  $\omega$ on the larger von Neumann algebra  $\BB(\FFF)$ induces a state on the subalgebra $\BB_1 \cong \BB(\FFF_1)$.  
We call the induced substate on  $\BB(\FFF_1)$  a ``substate" of $\omega$, the substate  ``delimited by" the orbitals in the closed subspace $\HH_1$ of $\HH$.   
It may also be called a  ``restriction" \cite{HainzlLewinSolovej} or ``localization" \cite{LewinNamRougerie} of $\omega$.

We are particularly interested in normal states on 
$
\BB(\FFF_1 \otimes \FFF_2) \cong \BB(\FFF)\ .
$
For each normal state $\omega$ there is a corresponding 
density operator $\Delta$ on $\FFF_1 \otimes \FFF_2$ such that  
$
    \omega(A) = \Tr(\Delta A)
$
for all $A \in \BB(\FFF)$.  
The induced substate $B \mapsto \omega(B \otimes I_2)$ on $ \BB(\FFF_1) $ is also normal.  
It is represented by the partial trace of $\Delta$ with respect to $\FFF_2$, i.e., by the density operator $\Delta_1$ on  $\FFF_1$ such that 
 \begin{equation}
 \label{partial trace 1}
       \Tr(\Delta_1 B) \equals \Tr_{\FFF_2}(\Delta(B \otimes I_2))
 \end{equation}
for all bounded operators $B \in \BB(\FFF_1)$.  

\subsection{1-particle density matrices}
\label{1-particle density matrices}

Consider the $n=m=1$ correlations $(\ref{correlations})$.  The map 
\begin{equation}
\label{similartothis}
(g,f) \ \longmapsto \ \Tr\big(\Delta \hat{a}^*(f)\hat{a}(g)\big),
\end{equation}
is a bounded conjugate-bilinear form, and therefore there exists a bounded operator $\gamma_\Delta$ on $\HH$ such that 
\begin{equation}
\label{defining property of $1$-pdm again}
    \langle g    | \gamma_\Delta f   \rangle \equals \Tr\big(\Delta \hat{a}^*(f)\hat{a}(g) \big)
\end{equation}
for all $f,g \in \HH$.  
We call $\gamma_\Delta$ the ``$1$-particle density matrix" or ``$1$-pdm" of $\Delta$.  
If $h \in \HH$ is any orbital,  the diagonal matrix element $ \langle h | \gamma_\Delta h  \rangle$ of the $1$-pdm is the probability that $h$ is occupied.  
The trace of $\gamma_\Delta $ is therefore the average {\it total} number of particles.  

The eigenvectors of $\gamma_{\Delta}$ are called ``natural orbitals" of $\Delta$, and the corresponding eigenvalues are the ``natural occupation numbers" of $\Delta$.  
For example, the $1$-pdm of the Slater determinant state (\ref{Slater determinant state}) is the orthogonal projector whose range is $\hbox{span}\{h_1 ,\ldots, h_n\}$.  
Thus, $n$ of the natural occupation numbers of that state are $1$ and the rest are $0$.

Let $\HH_1$ be a closed subspace of $\HH$, and let $\Delta_1$ be the substate of $\Delta$ defined in the preceding section.  
As noted there, the embedding and isomorphism (\ref{embeddingandisomorphism}) map the creation and annihilation operators 
$\hat{a}^*(f),\hat{a}(f) \in \BB(\FFF_1)$ with $f \in \HH_1$ to the creators and annihilators on $\FFF$ denoted the same way.   
Therefore, the matrix elements (\ref{defining property of $1$-pdm again}) of the $1$-pdm $ \gamma_{\Delta_1}$, defined for 
for all $f,g \in \HH_1$, are the same as the corresponding matrix elements of $ \gamma_\Delta$.  
In other words, $ \gamma_{\Delta_1}$ is the compression of $ \gamma_\Delta$ to $\HH_1 \subset \HH$.

Finally, we derive a formula for diagonal matrix elements of the $1$-pdm.
Let $(h_1,h_2,\ldots)$ be an ordered orthonormal basis of $\HH$ and define the Fock basis with reference to this system of orbitals. 
Let $\hat{a}_i^*$ and $\hat{a}_i$ denote $\hat{a}^*(h_i)$ and $\hat{a}(h_i)$, respectively.  
Using the anticommutation relations, one can verify from (\ref{Fock basis vector}) that 
\[
     \hat{a}_i^* \hat{a}_i |\bn\rangle    \equals \bn(i) |\bn\rangle  \ .
\]
Therefore 
\begin{eqnarray}
\label{property of $1$-pdm}
     \langle h_i|\gamma_{\Delta} h_i \rangle   
     & \stackrel{(\ref{defining property of $1$-pdm again})}{=} &
       \Tr( \Delta \hat{a}_i^* \hat{a}_i)  
     \equals 
     \sum\limits_{\bn \in \NN} \langle \bn | \Delta \hat{a}_i^* \hat{a}_i |\bn\rangle  
     \equals
     \sum\limits_{\bn \in \NN} \bn(i)\langle \bn | \Delta |\bn\rangle  
     \nonumber \\     
   &=&
      \sum\limits_{\bn \in \NN :\ \bn(i)=1} \langle \bn | \Delta |\bn\rangle  \ ,
\end{eqnarray} 
an expression for the probability that the $i^{th}$ reference orbital is occupied.


\subsection{Gauge-invariant quasi-free states}
\label{Gauge-invariant quasi-free states}

Recall that $\mathfrak{A}(\HH)$ denotes the (abstract) CAR algebra over a Hilbert space $\HH$ and $\pi\big( \mathfrak{A}(\HH) \big)$ 
denotes its (Fock) representation as a subalgebra of $\BB(\FFF)$.   
A state $\omega$ on $\mathfrak{A}(\HH)$ is ``quasi-free" if its $1$-particle correlations 
 $\omega\big( \hat{a}^*(f)\hat{a}(g)\big) $ and ``anomalous" correlations $\omega\big( \hat{a}(f)\hat{a}(g)\big)$ determine all of its higher correlations 
 $$\omega\big( \hat{a}^*(f_1)\cdots \hat{a}^*(f_n)a(g_m)\cdots \hat{a}(g_1)\big) $$ via Wick's formula, as in formula (2a.11) of Ref.~\cite{BachLiebSolovej}.    
The anomalous correlations of a {\it gauge-invariant} state vanish, and Wick's formula for gauge-invariant quasi-free states can be expressed compactly in terms of the state's $1$-pdm:

A state $\omega$ on $\mathfrak{A}(\HH)$ is ``gauge-invariant quasi-free"  \cite{AlickiFannes} 
 if there exists a bounded operator $Q$ on $\HH$ such that 
\begin{equation}
\label{giqf Wicks}
      \omega\big(\ \hat{a}^*(f_1)\cdots \hat{a}^*(f_n)\hat{a}(g_m)\cdots a(g_1)\ \big) 
      \equals 
        \delta_{mn}  
         \det  \big[ \langle g_i, Q f_j  \rangle \big]_{i,j=1}^n
\end{equation}
for all $f_1,\ldots,f_n,g_1,\ldots,g_m \in \HH$.   
 $Q$ is what we call the $1$-pdm of $\omega$.  Formula (\ref{giqf Wicks}) in the case $m=n=1$ implies that $Q$ has to be a positive-semidefinite contraction.   
It is known that, conversely, for any positive-semidefinite contraction $Q$ on $\HH$, there exists a unique gauge-invariant quasi-free state satisfying (\ref{giqf Wicks}).   
 
Formula (\ref{giqf Wicks}) also implies a couple of closure properties for gauge-invariant quasi-free (GIQF) states:
 \begin{trivlist}
\item{1.\quad}   If a sequence of GIQF states converges (pointwise) to a state, the limit is also GIQF.  
\item{2.\quad}  Let $\HH_1$ denote a closed subspace of $\HH$.  The CAR algebra $\mathfrak{A}(\HH_1)$ may be identified with a C$^*$-subalgebra of $\mathfrak{A}(\HH)$, and states on the latter induce states on $\mathfrak{A}(\HH_1)$ by restriction.  The restriction to $\mathfrak{A}(\HH_1)$ of a GIQF state on $\mathfrak{A}(\HH)$ is also GIQF.  
\end{trivlist}

 We are particularly interested in states represented by density operators on the Fock space $\FFF$.  
 The restriction of such a state to $\pi\big( \mathfrak{A}(\HH) \big) \subset \BB(\FFF)$ defines a state on the CAR algebra $\mathfrak{A}(\HH)$.  
 We say that a density operator  $\Gamma$ on $\FFF$, or the normal state corresponding to it, 
 is GIQF if its restriction to the CAR subalgebra of $\BB(\FFF)$ is GIQF.    
Denoting the $1$-pdm of $\Gamma$ by $\gamma_\Gamma$, the Wick relations (\ref{giqf Wicks}) for a GIQF density operator $\Gamma$ are that 
\begin{equation}
\label{Wicks}
      \Tr\big(\ \Gamma \ \hat{a}^*(f_1)\cdots \hat{a}^*(f_n)a(g_m)\cdots a(g_1) \ \big) 
      \equals 
       \delta_{mn} \det \big[\langle g_i, \gamma_\Gamma f_j  \rangle \big]_{i,j=1}^n
\end{equation}
for all $m,n$ such that $m+n>0$ and all $f_1,\ldots,f_n,g_1,\ldots,g_n \in \HH$.


\subsection{Free states}
\label{Free states appendix}

By our definition, a many-fermion state is free if it is represented by a GIQF density operator on a fermion Fock space and has finite expected particle number.  
Since substates of GIQF states are GIQF, and since substates of states of finite expected particle number also have finite expected particle number, substates of free states are free.

The $1$-pdm of a free density operator on $\FFF(\HH)$ is positive-semidefinite contraction on $\HH$ with finite trace.  
Conversely, any positive-semidefinite contraction operator on $\HH$ is the $1$-pdm of a unique free state on $\FFF(\HH)$.

\begin{proposition}  
\label{free existence}
Suppose $Q:\HH \longrightarrow \HH$  is a positive-semidefinite contraction operator with finite trace.  
Then there exists a unique free density operator on $\FFF(\HH)$ with $1$-pdm $Q$.  
\end{proposition}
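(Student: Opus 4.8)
The plan is to prove uniqueness first and then exhibit a free density operator with $1$-pdm $Q$ explicitly. For uniqueness, I would observe that any GIQF density operator has all of its correlations (\ref{correlations}) pinned down by its $1$-pdm through Wick's formula (\ref{Wicks}); hence two GIQF density operators with the same $1$-pdm $Q$ agree on every normally ordered monomial in the creators and annihilators, and therefore on all of $\pi(\mathfrak{A}(\HH))$. Since $\pi(\mathfrak{A}(\HH))$ is $\sigma$-weakly dense in $\BB(\FFF)$ and normal states are $\sigma$-weakly continuous, the two states coincide on all of $\BB(\FFF)$, forcing the density operators to be equal. This gives at most one free state with $1$-pdm $Q$, independently of the existence argument.

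For existence I would diagonalize $Q$: choose an orthonormal basis $(h_i)$ of $\HH$ of natural orbitals with $Q h_i = p_i h_i$, where $p_i \in [0,1]$ and $\sum_i p_i = \Tr Q < \infty$. Since $\sum_i p_i < \infty$ forces $p_i \to 0$, the index set splits as $Z = \{i : p_i = 0\}$, a finite set $F = \{i : p_i = 1\}$, and $M = \{i : 0 < p_i < 1\}$. Relative to $\HH = \HH_Z \oplus \HH_F \oplus \HH_M$ spanned by the corresponding orbitals, with the attendant factorization $\FFF(\HH) \cong \FFF(\HH_Z) \otimes \FFF(\HH_F) \otimes \FFF(\HH_M)$ coming from (\ref{basic isomorphism}), I would build $\Gamma$ as the tensor product of three pieces: the vacuum state on $\FFF(\HH_Z)$, the Slater determinant state on $\FFF(\HH_F)$ filling all the finitely many orbitals with $i \in F$, and on $\FFF(\HH_M)$ the Gibbs-form density operator (\ref{Gibbs form}) with $\lambda_i = \log\big((1-p_i)/p_i\big)$. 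The last is a legitimate density operator because $\sum_{i \in M} e^{-\lambda_i} = \sum_{i \in M} p_i/(1-p_i) < \infty$; indeed only finitely many $p_i$ exceed $\tfrac12$, and for the rest $p_i/(1-p_i) \le 2 p_i$ is summable. By the discussion surrounding (\ref{Gibbs form}) this Gibbs state is free, and a direct computation of marginals shows each of the three factors has $1$-pdm equal to the compression of $Q$ to its sector; in particular $\Gamma$ has trace $1$ and finite expected particle number $\Tr Q$.

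The remaining step, which I expect to be the main obstacle, is to show that the tensor product $\Gamma$ is itself GIQF with $1$-pdm $Q$, that is, that it satisfies Wick's formula (\ref{Wicks}) even for correlations mixing the three sectors. I would first note that the off-diagonal matrix elements of $\gamma_\Gamma$ vanish, by gauge invariance across sectors and by diagonality of each factor within a sector, so that $\gamma_\Gamma = Q$. I would then verify that a general correlation on the left of (\ref{Wicks}) factorizes across sectors, with any correlation pairing a creator in one sector against an annihilator in another vanishing because each factor separately conserves particle number, so an odd imbalance in any sector kills the expectation. This makes the matrix $\big[\langle g_i, \gamma_\Gamma f_j \rangle\big]$ block diagonal after sorting indices by sector, and each diagonal block is handled by the known quasi-freeness of the vacuum, the Slater determinant, and the Gibbs state. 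Combining the vanishing of the cross terms with the multiplicativity of the determinant over block-diagonal matrices reassembles the right-hand side of (\ref{Wicks}), so $\Gamma$ is free with $1$-pdm $Q$, completing existence; uniqueness was already established.
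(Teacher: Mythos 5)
Your proposal is correct in outline and, once unwound, constructs exactly the same density operator as the paper: the uniqueness argument (Wick's relations pin down all correlations, and $\sigma$-weak density of the CAR algebra in $\BB(\FFF)$ forces two normal states with the same correlations to coincide) is the paper's argument verbatim, and your tensor product of the vacuum on $\FFF(\HH_Z)$, the filled Slater determinant on $\FFF(\HH_F)$, and the Gibbs state on $\FFF(\HH_M)$ is precisely the diagonal operator $\Gamma = \sum_{\bn} \big\{\prod_i p_i^{\bn(i)}(1-p_i)^{1-\bn(i)}\big\}|\bn\ranglelangle\bn|$ that the paper writes down in one stroke in the Fock basis of natural orbitals. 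The difference is in how quasi-freeness is verified. The paper checks the Wick relations (\ref{Wicks}) directly when all the $f$'s and $g$'s are natural orbitals $h_i$ (where both sides are computable by hand for the diagonal Bernoulli-product state), and then extends to arbitrary arguments by observing that both sides of (\ref{Wicks}) are bounded multilinear forms that agree on an orthonormal basis. That one observation is exactly the device that dissolves what you identify as your ``main obstacle'': there is no need to factorize general mixed-sector correlations at all.

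Two soft spots in your route are worth naming. First, you lean on ``the known quasi-freeness of the Gibbs state'' for the $\FFF(\HH_M)$ factor; within the logical structure of this paper that fact is not proved anywhere except in this very proposition, so you are deferring the central computation rather than performing it (it is of course standard in the literature, e.g.\ Bach--Lieb--Solovej, so this is a citation issue rather than a mathematical one). Second, the claim that a general correlation ``factorizes across sectors'' under the isomorphism (\ref{basic isomorphism}) requires care with fermionic signs: creators and annihilators from different sectors anticommute, so regrouping the operator product by sector and matching the resulting signs against the row/column permutation signs in the block-diagonal determinant is a real (if routine) piece of bookkeeping that you wave at but do not carry out. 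Neither issue is fatal, but the paper's multilinearity argument sidesteps both.
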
 

\begin{proof}
Since $Q$ is a positive-semidefinite contraction operator with finite trace, it has a spectral decomposition 
\begin{equation}
\label{1-matrix spectral}
  Q \equals \sum p_i |h_i\rangle\!\langle h_i| 
\end{equation}
where $\{h_1,h_2,\ldots\}$ is an orthonormal basis of $\HH$ consisting of eigenvectors of $Q$.  
The corresponding eigenvalues $p_i$ all lie in the interval $[0,1]$ and their sum, the trace of $Q$, is finite.  

Let  $\big\{ |\bn\rangle :  \bn \in \NN \big\}$ denote the Fock basis of $\FFF(\HH)$ defined with respect to the ordered basis $(h_1,h_2,\ldots)$ of reference orbitals, as in formula (\ref{Fock basis vector}).  Define 
\[
   \Gamma \equals \sum_{\bn \in \NN}   \Big\{ \prod_i  p_i^{\bn(i)}  (1-p_i)^{1-\bn(i)}  \Big\}|\bn \rangle\!\langle \bn |\ .
\]
The off-diagonal matrix elements of the $1$-pdm $\gamma_{\Gamma}$ with respect to the basis $(h_1,h_2,\ldots)$ are all equal to $0$.  
Using formula (\ref{property of $1$-pdm}) it is easy to show that the diagonal matrix element $\langle h_i|\gamma_{\Gamma} h_i \rangle$ equals $p_i$. 
Thus the $1$-pdm of $\Gamma$ equals $Q$.   As $\Tr (Q) = \sum p_i$ is finite, $\Gamma$ has finite average particle number. 

To show that $\Gamma$ is GIQF, we have to verify that Wick's relations are satisfied.   
It is fairly straightforward to verify that the Wick's relations (\ref{Wicks}) are satisfied when $f_1,\ldots,f_n$ and $g_1,\ldots,g_m$ all belong to the set $\{h_1,h_2,\ldots\}$.
This suffices to show that all relations (\ref{Wicks}) are satisfied.  For fixed $m$ and $n$, the left-hand and right-hand sides of (\ref{Wicks}) are bounded multilinear forms in $f_1,\ldots,f_n$ and $g_1,\ldots,g_m$.   Since these bounded multilinear forms agree when the $f$'s and $g$'s are all drawn from the same orthonormal basis of $\HH$, they must be equal.  

Thus $\Gamma$ is GIQF and its $1$-pdm $Q$ has finite trace.  This means that $\Gamma$ is a free density operator with $1$-pdm $Q$.  
No other GIQF density operator on $\FFF(\HH)$ can have the same $1$-pdm, since the $1$-pdm of a GIQF density operator determines all higher correlations via Wick's relations (\ref{Wicks}), and no other density operator on $\FFF(\HH)$ can have all the same correlations.
\end{proof}

The proof of the preceding proposition can be modified to prove the following characterization of free states:  

\begin{corollary}  
\label{structural}
A density operator $\Gamma$ on the fermion Fock space $\FFF(\HH)$ is free if and only if 
there exists an ordered orthonormal basis $(h_1,h_2,\ldots)$ of $\HH$ and real numbers $p_i \in [0,1]$ with $\sum p_i < \infty$ such that 
\begin{equation}
\label{Free density}
   \Gamma \equals \sum_{\bn \in \NN}   \Big\{ \prod_i  p_i^{\bn(i)}  (1-p_i)^{1-\bn(i)}  \Big\}|\bn \rangle\!\langle \bn |
\end{equation}
when written in terms of the Fock basis vectors $|\bn \rangle$ that are indexed by the occupation numbers 
$   \bn = \big(\bn(1),\bn(2),\bn(3),\ldots \big)  $ of the reference orbitals in $(h_1,h_2,\ldots)$. 
\end{corollary}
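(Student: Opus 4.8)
The plan is to reduce both directions of the equivalence to Proposition~\ref{free existence}, whose proof already constructs a density operator of exactly the form~(\ref{Free density}). In other words, rather than re-run any computation, I would package the \emph{existence} half of Proposition~\ref{free existence} as the ``if'' direction and its \emph{uniqueness} half as the ``only if'' direction.

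For the ``if'' direction, suppose $\Gamma$ is given by~(\ref{Free density}) for some ordered orthonormal basis $(h_1,h_2,\ldots)$ of $\HH$ and numbers $p_i \in [0,1]$ with $\sum p_i < \infty$. I would set $Q = \sum_i p_i |h_i\rangle\!\langle h_i|$. Because the $p_i$ lie in $[0,1]$ and are summable, $Q$ is a positive-semidefinite contraction on $\HH$ with finite trace, and~(\ref{1-matrix spectral}) is one of its spectral decompositions. The proof of Proposition~\ref{free existence}, applied to this $Q$, defines the associated free density operator by the identical formula~(\ref{Free density}) and verifies, via~(\ref{property of $1$-pdm}) and~(\ref{Wicks}), that it is GIQF with $1$-pdm $Q$ and finite average particle number. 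Hence the given $\Gamma$ is free, with nothing further to check.

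For the ``only if'' direction, suppose $\Gamma$ is free. Then its $1$-pdm $\gamma_\Gamma$ is a positive-semidefinite contraction with finite trace, hence trace class and in particular compact. I would take a spectral decomposition $\gamma_\Gamma = \sum_i p_i |h_i\rangle\!\langle h_i|$ with $(h_1,h_2,\ldots)$ an orthonormal basis of eigenvectors, completing the eigenvectors for the nonzero eigenvalues by an orthonormal basis of $\ker\gamma_\Gamma$ (all of which are eigenvectors with eigenvalue $0$); the eigenvalues satisfy $p_i \in [0,1]$ and $\sum_i p_i = \Tr\gamma_\Gamma < \infty$. Proposition~\ref{free existence}, applied to $Q = \gamma_\Gamma$, produces a free density operator of the form~(\ref{Free density}) built from this basis and these $p_i$, and that operator also has $1$-pdm $\gamma_\Gamma$. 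By the uniqueness clause of Proposition~\ref{free existence}, a free density operator is determined by its $1$-pdm, so the constructed operator must coincide with $\Gamma$. Therefore $\Gamma$ has the form~(\ref{Free density}).

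The whole argument is essentially bookkeeping once Proposition~\ref{free existence} is in hand. The only point requiring a moment's care is the forward direction's appeal to spectral theory: a positive-semidefinite finite-trace operator admits a complete orthonormal eigenbasis even when $\HH$ is infinite-dimensional. This is immediate from compactness (finite trace $\Rightarrow$ trace class $\Rightarrow$ compact), so there is no genuine obstacle. What truly does the work is the uniqueness half of Proposition~\ref{free existence}, which itself rests on the fact that the $1$-pdm determines all higher correlations through Wick's relations~(\ref{Wicks}); that is the step that forces the two free states sharing a $1$-pdm to be equal.
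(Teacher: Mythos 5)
Your proposal is correct and matches the paper's intent: the paper gives no explicit proof, saying only that the proof of Proposition~\ref{free existence} ``can be modified,'' and your derivation---using the existence construction for the ``if'' direction and the uniqueness clause (via Wick's relations) for the ``only if'' direction---is precisely the natural way to carry that out. The one point needing care, the existence of a complete orthonormal eigenbasis of $\gamma_\Gamma$ in the infinite-dimensional case, you handle correctly via trace-class compactness.
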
 

Corollary~\ref{structural} provides us with a convenient structural formula for free states that we will use repeatedly in the sequel.  
The reference orbitals $h_i$ of the free density operator defined by formula (\ref{Free density}) are its natural orbitals, and the $p_i$ are its natural occupation numbers. 

Formula  (\ref{Free density}) shows clearly that, in a free state, the natural orbitals are occupied or unoccupied independently of one another.   
The free density operator (\ref{Free density}) is a mixture of Fock states $|\bn \rangle\!\langle \bn |$, and the weight assigned to the configuration $\bn$ is the probability of 
obtaining the outcome $\bn$ in a sequence of independent Bernoulli 
trials for the occupations $\bn(i)$ of reference orbitals $h_i$.   

 A calculation using (\ref{Free density}) shows that the von Neumann entropy of a free state with natural occupation numbers $p_i$ is 
\[
   S(\Gamma) \ = \ -\sum_i p_i \log p_i -\sum_i (1-p_i) \log (1-p_i) \ .  
\]


\subsection{Proof of Proposition 1}
\label{Proof of Proposition 1}

We recall the statement of Proposition~\ref{free at last}.

{\it 
A density operator on the fermion Fock space $\FFF(\HH)$ is free if and only if (i) its $1$-pdm has finite trace, and (ii) it is a limit in trace norm of density operators that represent substates of Slater determinant states.
}

\begin{proof}

Slater determinants states are free: they are the free states whose $1$-matrices are finite-rank orthogonal projectors.  
Substates of Slater determinant states are free, because all substates of free states are free.  
Limits of GIQF states are also GIQF.  
Therefore, any density operator $\Gamma$ on $\FFF(\HH)$ that satisfies (ii) is GIQF.  
If, in addition, the $1$-pdm of $\Gamma$ has finite trace, then $\Gamma$ free.  
This proves the sufficiency of (i) and (ii).  

To prove the necessity of conditions (i) and (ii), we show that any free state is a limit of free states whose $1$-matrices have finite rank, and that any free states whose $1$-matrix has finite rank is a substate of a Slater determinant state.  
  
Let $\Gamma$ be a free density operator with spectral representation (\ref{Free density}) and $1$-pdm (\ref{1-matrix spectral}).  
Define
\begin{equation}
\label{1-matrix compression}
  Q_N \equals \sum_{i=1}^N p_i |h_i\rangle\!\langle h_i| \ .
\end{equation}
Let $\Gamma_N$ be the unique free density with $1$-pdm $Q_N$. 
The probabilities 
$$\mathcal{P}_N(\bn) = \prod_{i=1}^N   p_i^{\bn(i)}  (1-p_j)^{1-\bn(i)}$$ 
converge for each $\bn \in \NN$ to the probabilities appearing as coefficients in (\ref{Free density}).  
Since the probability measures $\mathcal{P}_N$ converge pointwise to a probability measure on $\NN$, they converge in $\ell^1(\NN)$, and the corresponding density operators $\Gamma_N$, which are all diagonal with respect to the same Fock basis, converge in trace norm to $\Gamma$.

To conclude the proof, we show that the free density operators $\Gamma_N$ can be represented as substates of a Slater determinant states.  
We shall construct a Slater determinant $\Phi $ out of vectors in a larger Hilbert space $\HH'$, 
such that $Q_N$ is the $1$-pdm of the substate delimited by the orbitals in the subspace $\HH$.  

Let $\HH' = \HH \oplus \mathrm{span}\{ k'_1,k'_2,\ldots,k'_N\}$, where $\{ k'_1,k'_2,\ldots,k'_N\}$ is an orthonormal set of extraneous vectors, 
and define the Slater determinant $ \Phi \in \wedge^N \HH'$ by
\[
  \Phi \equals ( \sqrt{p_1} k_1 + \sqrt{1- p_1} k'_1 ) \wedge ( \sqrt{p_2} k_2 + \sqrt{1- p_2} k'_2 ) \wedge \cdots \wedge ( \sqrt{p_N} k_N + \sqrt{1- p_N} k'_N ) \ .
\]
The substate of $|\Phi \rangle\!\langle \Phi |$ delimited by the closed subspace $ \HH \cong \HH \oplus \{0\}\subset \HH'$ has $1$-pdm $Q_N$ of formula (\ref{1-matrix compression}).

Thus $\Gamma$ is a limit in trace norm of a sequence of density operators $\Gamma_N$ that represent substates of Slater determinant states.  
\end{proof}


\subsection{Proof of Lemma 1}
\label{Proof of Lemma 1}

To prove the propositions in Sec.~\ref{Special properties of nonfreeness}, we used the fact that 
\begin{equation}
\label{the fact}
   -  \Tr(\Delta \log\Gamma )  = - \Tr(\Gamma_\Delta \log\Gamma )
\end{equation}
whenever $\Gamma$ is free.  
We proved this fact only in the case where all of the natural occupation numbers of $\Gamma$ lie strictly between $0$ and $1$.  

General free states, where some of the $p_i$ may equal $0$ or $1$, are limits of Gibbs states (cf., Lemma~2.4 of Ref.~\cite{BachLiebSolovej}).  
However, we prefer to deal with free states directly, rather than as limits of Gibbs states.  
To prove formula (\ref{the fact}) in this spirit we have to keep an eye on the kernels of $\gamma_{\Delta}$ and $I-\gamma_\Gamma$.

\begin{lemma}
\label{lemma fermions}
 Let $\Gamma,\Delta \in \DD(\FFF)$ be two density operators on the fermion Fock space $\FFF$ with $1$-matrices $\gamma_{\Delta}$ and $\gamma_{\Gamma}$. 
 Suppose that $\Gamma$ is free. 
 
 The following are equivalent:
\\ \rm{(i)}\quad $\ker \Gamma \subset \ker \Delta$ 
\\ \rm{(ii)}\quad $\ker {\gamma_{\Gamma}} \subset \ker {\gamma_{\Delta}}$ and $\ker (I-\gamma_{\Gamma}) \subset \ker (I-\gamma_{\Delta})$
\end{lemma}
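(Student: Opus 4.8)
The plan is to diagonalize $\Gamma$ explicitly and to read off both its kernel and the kernel of its $1$-pdm from the same data. By Corollary~\ref{structural} there is an ordered orthonormal basis $(h_1,h_2,\ldots)$ of $\HH$ consisting of the natural orbitals of $\Gamma$, with occupation numbers $p_i\in[0,1]$ and $\sum p_i<\infty$, such that $\Gamma$ is diagonal in the associated Fock basis with eigenvalues $\prod_i p_i^{\bn(i)}(1-p_i)^{1-\bn(i)}$. Such an eigenvalue vanishes precisely when the configuration $\bn$ is \emph{forbidden}, meaning that $\bn(i)=1$ for some $i$ with $p_i=0$, or $\bn(i)=0$ for some $i$ with $p_i=1$. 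Hence $\ker\Gamma$ is the closed span of the forbidden Fock vectors $|\bn\rangle$. I would work throughout with this single Fock basis, using it to describe $\Delta$ as well; this common choice of basis is the linchpin of the argument.

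Next I would set up a dictionary between kernels of operators on $\FFF$ and vanishing occupation probabilities. Since $\Delta\ge 0$, a vector $v$ lies in $\ker\Delta$ if and only if $\langle v|\Delta|v\rangle=0$; applied to $v=|\bn\rangle$ this says $|\bn\rangle\in\ker\Delta$ iff $\langle\bn|\Delta|\bn\rangle=0$. Condition (i) is therefore equivalent to the statement that $\langle\bn|\Delta|\bn\rangle=0$ for every forbidden $\bn$. On the $1$-pdm side, $\ker\gamma_\Gamma$ is the closed span of $\{h_i:p_i=0\}$ and $\ker(I-\gamma_\Gamma)$ is the closed span of $\{h_i:p_i=1\}$. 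Because $\gamma_\Delta$ is a positive-semidefinite contraction, $h_i\in\ker\gamma_\Delta$ iff $\langle h_i|\gamma_\Delta h_i\rangle=0$, and $h_i\in\ker(I-\gamma_\Delta)$ iff $\langle h_i|\gamma_\Delta h_i\rangle=1$. Formula~(\ref{property of $1$-pdm}) expresses $\langle h_i|\gamma_\Delta h_i\rangle$ as $\sum_{\bn:\bn(i)=1}\langle\bn|\Delta|\bn\rangle$, and subtracting from $\Tr\Delta=1$ gives $1-\langle h_i|\gamma_\Delta h_i\rangle=\sum_{\bn:\bn(i)=0}\langle\bn|\Delta|\bn\rangle$. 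Since every summand is nonnegative, each of these sums vanishes iff all its terms do. Thus (ii) translates into: $\langle\bn|\Delta|\bn\rangle=0$ whenever some $i$ has $\bn(i)=1$ and $p_i=0$, and $\langle\bn|\Delta|\bn\rangle=0$ whenever some $i$ has $\bn(i)=0$ and $p_i=1$.

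With this dictionary both implications become immediate. For (i)$\Rightarrow$(ii), any $\bn$ meeting one of the two trigger conditions in the translation of (ii) is by definition forbidden, so $\langle\bn|\Delta|\bn\rangle=0$ by (i). For (ii)$\Rightarrow$(i), a forbidden $\bn$ satisfies one of the two trigger conditions, so $\langle\bn|\Delta|\bn\rangle=0$ by (ii); as this holds for every forbidden configuration and $\ker\Delta$ is closed, the closed span $\ker\Gamma$ lies in $\ker\Delta$.

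I expect the only real care to be bookkeeping rather than any deep obstacle: the content lives entirely at the edge values $p_i\in\{0,1\}$, which are exactly where $\Gamma$ and $I-\gamma_\Gamma$ acquire kernel, and one must keep the closures straight in infinite dimension (individual orbitals and Fock vectors land in the respective kernels, and one closes up using that kernels of bounded operators are closed subspaces). The essential structural point -- reducing a statement about the large Fock-space operator $\Gamma$ to one about its diagonal in the Fock basis built from its own natural orbitals, and then reading everything through the single formula~(\ref{property of $1$-pdm}) -- is what makes the two conditions line up term by term.
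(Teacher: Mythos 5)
Your proposal is correct and follows essentially the same route as the paper's proof: diagonalize $\Gamma$ via Corollary~\ref{structural}, identify $\ker\Gamma$ with the closed span of the ``forbidden'' configurations (those violating $\bn(i)=1$ when $p_i=1$ or $\bn(i)=0$ when $p_i=0$), and use formula~(\ref{property of $1$-pdm}) together with positivity of $\Delta$ and of $\gamma_\Delta$, $I-\gamma_\Delta$ to translate both (i) and (ii) into the vanishing of the same diagonal elements $\langle\bn|\Delta|\bn\rangle$. Your explicit ``dictionary'' framing and the remark about closing up over the spanning vectors are just a cleaner packaging of the two implications the paper proves directly.
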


\begin{proof}

Consider a fermionic free density operator $\Gamma$, written as in formula (\ref{Free density}).
 Let $J_1$ denote the set of indices $i$ for which $p_i=1$.  Note that $J_1$ is a finite set, because $\sum p_i$ is assumed to be finite.  
 Let $J_0$ denote the set of indices $j$ for which $p_j=0$.  
 It may happen that $J_1 \cup J_0$ is the entire index set for the orbitals; in that case $\Gamma$ is a Slater determinant state or the vacuum state. 
Define
\begin{equation}
\label{the notation for the index set: fermions}
     \NN_{\Gamma} \equals \big\{ \bn : \  \bn(j) = 1 \hbox{ if } j \in J_1 \hbox{ and }  \bn(j) = 0 \hbox{ if } j \in J_0 \big\}\ .
\end{equation}
Let $\FFF_{\Gamma}$ denote the the closure of $\mathrm{span}\big\{ |\bn\rangle: \bn \in  \NN_{\Gamma} \big\}$, a subspace  of the fermion Fock space $\FFF(\HH)$.
Then we can see from (\ref{Free density}) that 
\begin{equation}
\label{Quasifree fermions rewritten in heaven}
   \Gamma \equals \sum_{\bn \in  \NN_{\Gamma}}   \Big\{ \prod_{j \notin J_1 \cup J_0}  p_j^{\bn(i)}  (1-p_j)^{1-\bn(j)}  \Big\}|\bn \rangle\!\langle \bn |
\end{equation}
and
\begin{equation}
\label{kernel of Gamma: fermions}
   \ker \Gamma 
   \equals	\overline{ \hbox{span} } \big\{ |\bn\rangle: \bn \notin  \NN_{\Gamma} \big\} \ .
\end{equation}

First we prove that (i) implies (ii).  

Assume that $\ker \Gamma \subset \ker \Delta$. 

If $\gamma_{\Gamma} f_j = 0$,  then  $\langle f_j | \gamma_{\Gamma}| f_j \rangle = 0$, and therefore, by (\ref{property of $1$-pdm}), 
$\langle \bn | \Gamma | \bn\rangle = 0$ for all $\bn$ such that $\bn(j) = 1$.  Therefore,  if $\bn(j) = 1$, then $| \bn\rangle \in \ker \Gamma$ and hence also  $| \bn\rangle \in \ker \Delta$.   This implies that $\langle f_j | \gamma_{\Delta}| f_j \rangle = 0$, again by (\ref{property of $1$-pdm}), and therefore $\gamma_{\Delta} f_j = 0$.
 
Similarly, if $(I-\gamma_{\Gamma} )f_j = 0$, then $\gamma_{\Gamma} f_j = f_j$ and therefore $1= \langle f_j | \gamma_{\Gamma}| f_j \rangle $.  
By (\ref{property of $1$-pdm}), $\langle \bn | \Gamma | \bn\rangle = 0$ for all $\bn$ such that $\bn(j) = 0$.  
Since  $\ker \Gamma \subset \ker \Delta$, also $\langle \bn | \Delta | \bn\rangle = 0$ for all $\bn$ such that $\bn(j) = 0$, and therefore  $\langle f_j | \gamma_{\Delta}| f_j \rangle = 1$, or $(I-\gamma_{\Delta} )f_j = 0$.

The last few paragraphs establish (ii).
Now we prove that (ii) implies (i).

Assume that $\ker {\gamma_{\Gamma}} \subset \ker {\gamma_{\Delta}}$ and $\ker (I-\gamma_{\Gamma}) \subset \ker (I-\gamma_{\Delta})$.  
We wish to prove that $\ker \Gamma \subset  \ker \Delta$.  
By (\ref{kernel of Gamma: fermions}) it suffices to show that every $|\bn\rangle$ with $ \bn \notin  \NN_{\Gamma} $ is in the kernel of $ \Delta$.
Every $ \bn \notin  \NN_{\Gamma} $ has either $\bn(j)=1$ for some $j \in J_0$, or $\bn(j)=0$ for some $j \in J_1$.  In both cases, $|\bn\rangle  \in \ker \Delta$, as we now show.

Suppose $\bn(j)=1$ for some $j \in J_0$.  Then $f_j \in \ker \gamma_{\Gamma}$ and, since $\ker {\gamma_{\Gamma}} \subset \ker {\gamma_{\Delta}}$, also $f_j \in \ker \gamma_{\Delta}$ and therefore $\langle f_j | \gamma_{\Delta}| f_j \rangle = 0$.  By (\ref{property of $1$-pdm}), 
$\langle \bn | \Delta | \bn\rangle = 0$ for all $\bn$ such that $\bn(j)=1$.  Thus, $|\bn\rangle \in \ker \Delta$ if $\bn(j)=1$ for some $j \in J_0$.  

Suppose $\bn(j)=0$ for some $j \in J_1$. Then $f_j \in \ker (I-\gamma_{\Gamma})$ and therefore, by assumption,  $f_j \in \ker (I-\gamma_{\Delta})$.  
This implies that $\gamma_{\Delta}f_j = f_j$, $\langle f_j | \gamma_{\Delta}| f_j \rangle = 1$, and $\langle \bn | \Delta | \bn\rangle = 0$ for all $\bn$ such that $\bn(j)=0$.  
 Thus, $|\bn\rangle \in \ker \Delta$ if $\bn(j)=0$ for some $j \in J_1$.  
\end{proof}

\begin{corollary}
\label{first cor}
$\ker \Gamma_\Delta \subset \ker \Delta$.  
\end{corollary}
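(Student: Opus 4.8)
The plan is to obtain this immediately as a specialization of Lemma~\ref{lemma fermions}, taking $\Gamma = \Gamma_\Delta$. By construction $\Gamma_\Delta$ is the unique free density operator with the same $1$-pdm as $\Delta$, so $\gamma_{\Gamma_\Delta} = \gamma_\Delta$ as operators on $\HH$. Lemma~\ref{lemma fermions} is precisely the tool that converts a kernel inclusion between Fock-space density operators into kernel inclusions between their $1$-pdms, so I only need to verify that its condition (ii) holds in the present situation and then read off condition (i).

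First I would note that $\Gamma_\Delta$ is free, so the hypotheses of Lemma~\ref{lemma fermions} are satisfied with $\Gamma = \Gamma_\Delta$ and the given $\Delta$. Next, because $\gamma_{\Gamma_\Delta}$ and $\gamma_\Delta$ are literally the same operator, their kernels coincide, so in particular $\ker \gamma_{\Gamma_\Delta} \subset \ker \gamma_\Delta$. Applying the same observation to $I - \gamma_{\Gamma_\Delta} = I - \gamma_\Delta$ gives $\ker (I - \gamma_{\Gamma_\Delta}) \subset \ker (I - \gamma_\Delta)$. Thus condition (ii) of Lemma~\ref{lemma fermions} holds, in fact with equalities in place of the inclusions.

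Finally, invoking the implication (ii) $\Rightarrow$ (i) of Lemma~\ref{lemma fermions} yields $\ker \Gamma_\Delta \subset \ker \Delta$, as desired. I expect no real obstacle here: all of the work has already been carried out in Lemma~\ref{lemma fermions}, and the corollary is just the specialization to the canonical free reference state $\Gamma_\Delta$, where the equality of $1$-pdms makes both kernel inclusions of condition (ii) automatic.
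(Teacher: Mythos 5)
Your proposal is correct and is exactly the argument the paper intends: the corollary is stated immediately after Lemma~\ref{lemma fermions} precisely because taking $\Gamma = \Gamma_\Delta$ makes condition (ii) hold trivially (with equality of kernels, since $\gamma_{\Gamma_\Delta} = \gamma_\Delta$), and the implication (ii) $\Rightarrow$ (i) then gives $\ker \Gamma_\Delta \subset \ker \Delta$.
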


\begin{corollary}
\label{second cor}
If $\Gamma$ is free, then $\ker \Gamma \subset \ker \Delta$ if and only if $\ker \Gamma \subset \ker  \Gamma_\Delta$.
\end{corollary}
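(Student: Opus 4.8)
The plan is to derive this corollary as an immediate consequence of Lemma~\ref{lemma fermions}, the key observation being that condition (ii) of that lemma depends on the second density operator \emph{only through its $1$-pdm}. Since $\Delta$ and $\Gamma_\Delta$ share the same $1$-pdm by construction ($\gamma_{\Gamma_\Delta} = \gamma_\Delta$), the kernel-inclusion criterion that Lemma~\ref{lemma fermions} provides will be literally identical for the two inclusions we wish to compare.

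Concretely, I would first apply Lemma~\ref{lemma fermions} to the pair $(\Gamma,\Delta)$. Both operators lie in $\DD(\FFF)$ and $\Gamma$ is free, so the hypotheses are met, and the lemma gives that $\ker\Gamma \subset \ker\Delta$ holds if and only if
\[
   \ker\gamma_{\Gamma} \subset \ker\gamma_{\Delta} \quad\text{and}\quad \ker(I-\gamma_{\Gamma}) \subset \ker(I-\gamma_{\Delta}).
\]
Next I would apply Lemma~\ref{lemma fermions} a second time, now to the pair $(\Gamma,\Gamma_\Delta)$. This application is legitimate because $\Gamma_\Delta$ is itself a free state, hence belongs to $\DD(\FFF)$, while $\Gamma$ remains free as before. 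The lemma then gives that $\ker\Gamma \subset \ker\Gamma_\Delta$ holds if and only if
\[
   \ker\gamma_{\Gamma} \subset \ker\gamma_{\Gamma_\Delta} \quad\text{and}\quad \ker(I-\gamma_{\Gamma}) \subset \ker(I-\gamma_{\Gamma_\Delta}).
\]

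Finally I would invoke the defining property of $\Gamma_\Delta$, namely that it is the unique free state with $\gamma_{\Gamma_\Delta} = \gamma_\Delta$. Substituting this identity into the second criterion turns it into exactly the first one. Thus both inclusions $\ker\Gamma \subset \ker\Delta$ and $\ker\Gamma \subset \ker\Gamma_\Delta$ are equivalent to the \emph{same} pair of kernel conditions on $\gamma_\Delta$, and are therefore equivalent to each other, which is the claim. I do not expect any real obstacle here: essentially all the work is carried by Lemma~\ref{lemma fermions}, and the only point requiring a moment's attention is verifying that its hypotheses hold in the second application (which they do, since free states have finite-trace $1$-pdm and hence lie in $\DD(\FFF)$). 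The corollary is then a one-line consequence of the observation that condition (ii) is a statement purely about the shared $1$-pdm.
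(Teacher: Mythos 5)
Your argument is correct and is precisely the intended one: the paper states Corollary~\ref{second cor} without proof as an immediate consequence of Lemma~\ref{lemma fermions}, relying on exactly the observation you make, namely that condition (ii) of that lemma sees the second operator only through its $1$-pdm, which $\Delta$ and $\Gamma_\Delta$ share. Your check that $\Gamma_\Delta \in \DD(\FFF)$ (so the lemma applies to the pair $(\Gamma,\Gamma_\Delta)$) is the right point to verify and it holds.
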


Using these corollaries, we now complete the proof of Lemma~1.  Recall that lemma:

{\it 
Suppose $\Delta \in \DD(\FFF)$ and let $\Gamma_{\Delta}$ denote the unique free state that has the same $1$-pdm as $\Delta$.
If $\Gamma$ is free then 
\begin{equation}
\label{conclusion again}  -  \Tr(\Delta \log\Gamma )   \equals - \Tr(\Gamma_\Delta \log\Gamma )\ . \
\end{equation}
}

\begin{proof}  

Recall the notation used in formulas (\ref{the notation for the index set: fermions}) and (\ref{Quasifree fermions rewritten in heaven}).   
The unbounded operator $\log \Gamma$, defined as in Section~\ref{Relative entropy for density operators},  
is  
\begin{eqnarray*}
\lefteqn {\log \Gamma  
      \equals
        \sum_{\bn \in \NN_\Gamma}  \sum_{i \notin J_1 \cup J_0}  \Big(   \bn(i) \log(p_i) 
        + (1-\bn(i)) \log(1-p_i)   \Big) \ |  \bn \rangle\!\langle\bn  | } \\
       & = & 
       \sum_{i \notin J_1 \cup J_0}  \Bigg[ \log(p_i) \sum_{\bn \in \NN_\Gamma: \bn(i) = 1 } | \bn  \rangle \! \langle\bn  | 
        \ + \    \log(1-p_i) \sum_{\bn \in \NN_\Gamma: \bn(i) =  0 } | \bn  \rangle\!\langle\bn  | \Bigg] \ .
\end{eqnarray*}

If $\bn \notin \NN_\Gamma$ then $ | \bn\rangle \in \ker(\Gamma) \subset \ker \Delta$, 
and therefore $\langle \bn | \Delta | \bn\rangle = 0$.  
Thus, if $\ker \Gamma  \subset \ker \Delta$, then 
\[
\sum\limits_{\bn \in \NN_\Gamma:\  \bn(i) = 1 }\langle \bn | \Delta |\bn\rangle 
\equals  
\sum\limits_{\bn  \in \NN: \bn(i) = 1} \langle \bn | \Delta |\bn\rangle   
\equals  
\langle f_i|\gamma_{\Delta} | f_i \rangle  
\]
by (\ref{property of $1$-pdm}).
Using this, we have that 
\begin{eqnarray}
\lefteqn{ - \Tr(\Delta \log\Gamma ) } 
   \nonumber \\
   &= &
  - \sum_{i \notin J_1 \cup J_0}  \Bigg[ \log(p_i) \sum_{\bn \in \NN_\Gamma: \bn(i) = 1 } 
   \Tr\big( \Delta |\bn  \rangle\!\langle\bn  | \big)  
    \ + \    \log(1-p_i) \sum_{\bn \in \NN_\Gamma: \bn(i) =  0 }\Tr\big( \Delta |\bn  \rangle\!\langle\bn  | \big)  \Bigg] 
  \nonumber \\
   & = &
   - \sum_{i \notin J_1 \cup J_0}  
    \Bigg[ \log(p_i) \sum_{\bn \in \NN_\Gamma: \bn(i) = 1 } 
     \langle \bn | \Delta |\bn\rangle
       \ + \    \log(1-p_i) \sum_{\bn \in \NN_\Gamma: \bn(i) =  0 } \langle \bn | \Delta |\bn\rangle  \Bigg] 
      \nonumber \\
 & = & 
 -  \sum_{i \notin J_1 \cup J_0}  \Big[  \log(p_i)\langle f_i|\gamma_{\Delta}| f_i  \rangle 
    +  \log(1-p_i) \big(1 - \langle f_i|\gamma_{\Delta} | f_i  \rangle\big) \Big]   
\label{maximizeme fermions}
\end{eqnarray}
provided $\ker \Gamma  \subset \ker \Delta$.

Formula (\ref{maximizeme fermions}) is valid provided that $ \ker \Gamma \subset \ker \Delta$.  
By Corollary ~\ref{second cor}, if  $ \ker \Gamma \subset \ker \Delta$ then also  $\ker \Gamma \subset \ker \Gamma_{\Delta}$. 
Since $\Delta$ and $\Gamma_{\Delta}$ have the same $1$-pdm $\gamma_\Delta$, 
formula (\ref{maximizeme fermions}) implies the conclusion (\ref{conclusion again})  if  $ \ker \Gamma \subset \ker \Delta$.

If $\ker \Gamma \not\subset \ker \Delta$, then $\ker \Gamma \not\subset \ker \Gamma_\Delta$ by Corollary ~\ref{first cor}, and both $-\Tr (\Delta \log \Gamma)$ and $-\Tr (\Gamma_\Delta \log \Gamma)$ equal $+\infty$ by definition.  The conclusion (\ref{conclusion again}) holds trivially in this case.  
\end{proof}



\bigskip
\noindent {\bf Acknowledgments:}

This work has been supported by the Austrian Science Foundation
(FWF) under grant F41 (SFB ``VICOM") and W1245 (DK ``Nonlinear PDEs").
This review has benefited from insightful suggestions by anonymous referees of our unpublished manuscript \cite{GottliebMauserArchived}.





\end{document}